\newcommand{\keywords}[1]{\par\addvspace\baselineskip
\noindent\keywordname\enspace\ignorespaces#1}
\def\polylog{\operatorname{polylog}}
\newcommand{\R}{\mathbb{R}}
\newcommand{\E}{\mathbb{E}}
\renewcommand{\P}{\mathbb{P}}
\newcommand{\vol}{\textmd{vol}}
\newcommand{\supp}{\textmd{supp}}
\renewcommand{\L}{\mathcal{L}}
\newcommand{\pr}{\textsf{pr}}
\newcommand{\localcheeg}{\Phi^*(S)}
\newcommand{\hkprseedalg}{\texttt{ApproxHKPRseed}}
\newcommand{\hkprseedalgparams}{\texttt{ApproxHKPRseed($G,t,u,\epsilon$)}}
\newcommand{\rparam}{\frac{16}{\epsilon^3}\log n}
\newcommand{\kparam}{\frac{\log(\epsilon^{-1})}{\log\log(\epsilon^{-1})}}
\newcommand{\hkprcomplexity}{O\big(\frac{\log(\epsilon^{-1})\log n}{\epsilon^3\log\log(\epsilon^{-1})}\big)}
\newcommand{\partitionalg}{\texttt{ClusterHKPR}}
\newcommand{\partitionalgparams}{\texttt{ClusterHKPR($G,u,s,\varsigma,\phi,\epsilon$)}}
\newcommand{\tparamcluster}{\phi^{-1}\log(\frac{2\sqrt{\varsigma}}{1-\epsilon} + 2\epsilon s)}
\begin{document}
\mainmatter

\title{Computing Heat Kernel Pagerank and a Local Clustering
Algorithm\footnote{An extended abstract appeared in~\cite{chung2014computing}.}}
\titlerunning{Computing Heat Kernel Pagerank}

\author{Fan Chung
\and Olivia Simpson}
\authorrunning{Chung and Simpson}

\institute{University of California, San Diego\\
La Jolla, CA 92093\\
\mailsa}

\toctitle{Computing Heat Kernel Pagerank}
\tocauthor{Chung and Simpson}
\maketitle

\begin{abstract}
Heat kernel pagerank is a variation of Personalized PageRank given in an
exponential formulation.  In this work, we present a sublinear time algorithm
for approximating the heat kernel pagerank of a graph.  The algorithm works by
simulating random walks of bounded length and runs in time $\hkprcomplexity$,
assuming performing a random walk step and sampling from a distribution with
bounded support take constant time.

The quantitative ranking of vertices obtained with heat kernel pagerank can be
used for local clustering algorithms.  We present an efficient local clustering
algorithm that finds cuts by performing a sweep over a heat kernel pagerank
vector, using the heat kernel pagerank approximation algorithm as a subroutine.
Specifically, we show that for a subset $S$ of Cheeger ratio $\phi$, many
vertices in $S$ may serve as seeds for a heat kernel pagerank vector which will
find a cut of conductance $O(\sqrt{\phi})$.
\end{abstract}
\keywords{Heat kernel pagerank, heat kernel, local algorithms}

\section{Introduction}
\label{sec:introduction}
In large networks, many similar elements can be identified to a single, larger
entity by the process of clustering.  Increasing granularity in massive networks
through clustering eases operations on the network.  There is a large literature
on the problem of identifying clusters in a graph
(\cite{csz:spectralkway:94,shi2000normalized,njw:spectralcluster:02,kvv:clusterings:04,lc:powercluster:10,lc:textcluster:10}),
and the problem has found many applications.  However, in a variation of the
graph clustering problem we may only be interested in a single cluster near one
element in the graph.  For this, local clustering algorithms are of greater use.

As an example, the problem of finding a local cluster arises in protein
networks.  A protein-protein interaction (PPI) network has undirected edges that
represent an interaction between two proteins.  Given two PPI networks, the goal
of the pairwise alignment problem is to identify an optimal mapping between the
networks that best represents a conserved biological function.
In~\cite{liao:protein:09}, a local clustering algorithm is applied from a
specified protein to identify a group similar to that protein.  Such local
alignments are useful for analysis of a particular component of a biological
system (rather than at a systems level which will call for a global alignment).
Local clustering is also a common tool for identifying communities in a network.
A community is loosely defined as a subset of vertices in a graph which are more
strongly connected internally than to vertices outside the subset.  Properties
of community structure in large, real world networks have been studied
in~\cite{lldm:localnetwork:08}, for example, where local clustering algorithms
are employed for identifying communities of varying quality.

The goal of a local clustering algorithm is to identify a cluster in a graph
near a specified vertex.  Using only local structure avoids unnecessary
computation over the entire graph.  An important consequence of this are running
times which are often in terms of the size of the small side of the partition,
rather than of the entire graph.  The best performing local clustering
algorithms use probability diffusion processes over the graph to determine
clusters (see Section~\ref{sec:previouswork}).  In this paper we present a new
algorithm which identifies a cut near a specified vertex with simple
computations over a heat kernel pagerank vector.

The theory behind using heat kernel pagerank for computing local clusters has
been considered in previous work.  Here we give an efficient approximation
algorithm for computing heat kernel pagerank.  Note that we use a ``relaxed''
notion of approximation which allows us to derive a sublinear probabilistic
approximation algorithm for heat kernel pagerank, while computing an exact or
sharp approximation would require computation complexity of order similar to
matrix multiplication.  We use this sublinear approximation algorithm for
efficient local clustering.

\subsection{Previous work}\label{sec:previouswork}
\paragraph{Heat kernel and approximation of matrix exponentials.} Heat kernel
pagerank was first introduced in~\cite{chung:hkpr:07} as a variant of
personalized PageRank~\cite{haveliwala2002topic}.  While PageRank can be
viewed as a geometric sum of random walks, the heat kernel pagerank is an
exponential sum of random walks.  An alternative interpretation of the heat
kernel pagerank is related to the heat kernel of a graph as the fundamental
solution to the heat equation.  As such, it has connections with diffusion and
mixing properties of graphs and has been incorporated into a number of graph
algorithmic primitives.

Orecchia et al. use a variant of heat kernel random walks in their randomized
algorithm for computing a cut in a graph with prescribed balance
constraints~\cite{osv:balsep:11}.  A key subroutine in the algorithm is a
procedure for computing $e^{-A}v$ for a positive semidefinite matrix $A$ and a
unit vector $v$ in time $\tilde{O}(m)$ for graphs on $n$ vertices and $m$ edges.
They show how this can be done with a small number of computations of the form
$A^{-1}v$ and applying the Spielman-Teng linear
solver~\cite{st:graphpartitioning:stoc04}.  Their main result is a randomized
algorithm that outputs a balanced cut in time $O(m\polylog n)$.  In a follow up
paper, Sachdeva and Vishnoi~\cite{sachdeva2013matrix} reduce inversion of
positive semidefinite matrices to matrix exponentiation, thus proving that
matrix exponentiation and matrix inversion are equivalent to polylog factors.
In particular, the nearly-linear running time of the balanced separator
algorithm depends upon the nearly-linear time Spielman-Teng solver.

Another method for approximating matrix exponentials is given by Kloster and
Gleich in~\cite{kloster:columnexp:waw13}.  They use a Gauss-Southwell iteration
to approximate the Taylor series expansion of the column vector $e^{P}e_c$ for
transition probability matrix $P$ and $e_c$ a standard basis vector.  The
algorithm runs in sublinear time assuming the maximum degree of the network is
$O(\log\log n)$.

\paragraph{Local clustering.} Local clustering algorithms were introduced
in~\cite{st:graphpartitioning:stoc04}, where Spielman and Teng present a
nearly-linear time algorithm for finding local partitions with certain balance
constraints.  Let $\Phi(S)$ denote the cut ratio of a subset $S$ that we will
later define as the Cheeger ratio.  Then, given a graph and a subset of vertices
$S$ such that $\Phi(S) < \phi$ and $\vol(S)\leq \vol(G)/2$, their algorithm
finds a set of vertices $T$ such that $\vol(T) \geq \vol(S)/2$ and $\Phi(T) \leq
O(\phi^{1/3}\log^{O(1)}n)$ in time $O(m(\log n/\phi)^{O(1)})$.  This seminal
work incorporates the ideas of Lov\'asz and
Simonovitz~\cite{ls:mixingisoperimetric:90,ls:randomwalks:93} on isoperimetric
properties of random walks, and their algorithm works by simulating truncated
random walks on the graph.  Spielman and Teng later improve their approximation
guarantee to $O(\phi^{1/2}\log^{3/2}n)$ in a revised version of the
paper~\cite{st:localcluster:08}.

The algorithm of~\cite{st:graphpartitioning:stoc04,st:localcluster:08} improves
the spectral methods of~\cite{donath1972algorithms} and a similar expression
in~\cite{aloniso85} which use an eigenvector of the graph Laplacian to partition
the vertices of a graph.  However, the local approach of Spielman and Teng
allows us to identify focused clusters without investigating the entire graph.
For this reason, the running time of this and similar local algorithms are
proportional to the size of the small side of the cut, rather than the entire
graph.

Andersen et al.~\cite{acl:prgraphpartition:focs06} give an improved local
clustering algorithm using approximate PageRank vectors.  For a vertex subset
$S$ with Cheeger ratio $\phi$ and volume $k$, they show that a PageRank vector
can be used to find a set with Cheeger ratio $O(\phi^{1/2}\log^{1/2}k)$.  Their
local clustering algorithm runs in time $O(\phi^{-1}m\log^4 m)$.  The analysis
of the above process was strengthened in~\cite{ac:sharpdrops:07} and emphasized
that vertices with higher PageRank values will be on the same side of the cut as
the starting vertex.

Andersen and Peres~\cite{ap:evolving:09} later simulate a volume-biased evolving
set process to find sparse cuts.  Although their approximation guarantee is the
same as that of~\cite{acl:prgraphpartition:focs06}, their process yields a
better ratio between the computational complexity of the algorithm on a given
run and the volume of the output set.  They call this value the
\emph{work/volume ratio}, and their evolving set algorithm achieves an expected
ratio of $O(\phi^{-1/2}\log^{3/2}n)$.  This result is improved by Gharan and
Trevisan in~\cite{gt:optimalcluster:12} with an algorithm that finds a set of
conductance at most $O(\epsilon^{-1/2}\phi^{1/2})$ and achieves a work/volume
ratio of $O(\varsigma^{\epsilon}\phi^{-1/2}\log^2n)$ for target volume
$\varsigma$ and target conductance $\phi$.  The complexity of their algorithm is
achieved by running copies of an evolving set process in parallel.

\subsection{Our contributions}
In this paper, we give a probabilistic approximation algorithm for computing a
vector that yields a ranking of vertices close to the heat kernel pagerank
vector.  The approximation algorithm, \hkprseedalg, works by simulating random
walks and computing contributions of these walks for each vertex in the graph.
Assuming access to a constant-time query which returns the destination of a heat
kernel random walk starting from a specified vertex, \hkprseedalg~runs in time
$\hkprcomplexity$. In the context of this paper, we strictly address heat kernel
pagerank with a single vertex as a seed -- an analogy to Personalized PageRank
with total preference given to a single vertex.  Note that heat kernel pagerank
with a general preference vector (see Section~\ref{sec:preliminaries}) is a
combination of heat kernel pagerank with a single seed vertex.  We refer the
reader to~\cite{cs:imlinear:14} for this more general case.

Using \hkprseedalg~as a subroutine, we then present a local clustering algorithm
that uses a ranking according to an approximate heat kernel pagerank.  Let $G$
be a graph and $S$ a proper vertex subset with volume $\varsigma \leq \vol(G)/4$
and Cheeger ratio $\Phi(S) \leq \phi$.  Then, with probability at least
$1-\epsilon$, our algorithm outputs either a cutset $T$ with $\vol(T) \geq
\vol(S)/2$ and $\varsigma$-local Cheeger ratio at most $O(\sqrt{\phi})$ or a
certificate that no such set exists.  The algorithm has work/volume ratio of
$O(\varsigma^{-1}\epsilon^{-3}\log
n\log(\epsilon^{-1})\log\log(\epsilon^{-1}))$.  This result is formalized in
Theorem~\ref{thm:localpart}.  A summary of previous results and our
contributions are given in Table~\ref{table:resultssummary}.

\begin{table}
\centering
\begin{tabular}{c|c|c}\hline
Algorithm & Conductance of output set & Work/volume ratio\\\hline
\cite{st:localcluster:08} & $O(\phi^{1/2}\log^{3/2}n)$ & $O(\phi^{-2}\polylog n)$\\
\cite{acl:prgraphpartition:focs06} & $O(\phi^{1/2}\log^{1/2}n)$ & $O(\phi^{-1}\polylog n)$\\
\cite{ap:evolving:09} & $O(\phi^{1/2}\log^{1/2} n)$ & $O(\phi^{-1/2}\polylog n)$\\
\cite{gt:optimalcluster:12} & $O(\epsilon^{-1/2}\phi^{1/2})$ & $O(\varsigma^{\epsilon}\phi^{-1/2}\polylog n)$\\
This work & $O(\phi^{1/2})$ & $O(\varsigma^{-1}\epsilon^{-3}\log n\log(\epsilon^{-1})\log\log(\epsilon^{-1}))$\\\hline
\end{tabular}
\caption{Summary of local clustering algorithms}
\label{table:resultssummary}
\end{table}

As a summary of the contributions of this work,
\begin{enumerate}[label={(\arabic*)}]
\item We present an algorithm for computing a heat kernel pagerank vector from a
single seed vertex with $(1+\epsilon)$ approximation guarantee with high
probability in time $\hkprcomplexity$.\label{point:hkpralg} 
\item We present a local clustering algorithm which uses a ranking according
to heat kernel pagerank.  In our clustering algorithm we use the probabilistic
approximation algorithm in \ref{point:hkpralg} as a subroutine, which gives a
sublinear-time local clustering algorithm.\label{point:clusteralg} 
\item Using the approximation guarantees of \ref{point:hkpralg} and the analysis
for \ref{point:clusteralg}, we present a local clustering algorithm which with
high probability returns a set with Cheeger ratio at most $O(\sqrt{\phi})$,
given a target ratio $\phi$, with work/volume ratio
$O(\varsigma^{-1}\epsilon^{-3}\log n\log(\epsilon^{-1})\log\log(\epsilon^{-1}))$
where $\varsigma$ is proportional to the volume of the output set.
\item We validate the performance analysis by implementing our algorithms using
several real and synthetic graphs as examples.  The clusters that were derived
in these examples using the local clustering algorithm and heat kernel pagerank
approximation have Cheeger ratios as guaranteed in Theorem~\ref{thm:localpart}.
\end{enumerate}

The theory behind finding local cuts with heat kernel pagerank vectors was first
presented in~\cite{chung:hkpr:07,chung:partitionhkpr:im09}.  Using some of this
analysis as a starting point, we provide the algorithm for computing local
clusters, called \partitionalg.

\subsection{Organization}
The remainder of the paper is organized as follows.  First, we give some
definitions and useful facts in Section~\ref{sec:preliminaries}.  We give a
sublinear-time algorithm for approximating heat kernel pagerank in
Section~\ref{sec:hkprapprox}.  In Section~\ref{sec:goodcuts} we give the
analysis justifying our local clustering algorithm, which we present in
Section~\ref{sec:localpartition}.  Sections~\ref{sec:rankings}
and~\ref{sec:expresults} contain experimental results.  In both sections,
experiments are performed on real data and on synthetic graphs generated with
random graph generators.  In Section~\ref{sec:rankings} we demonstrate how the
rankings obtained using approximate heat kernel pagerank vectors are compared
with rankings obtained using exact heat kernel pagerank vectors.  In
Section~\ref{sec:expresults} we compute local clusters by implementing the
\partitionalg~algorithm.  We compare the volume and Cheeger ratio of these
clusters to those output by two existing sweep-based local clustering
algorithms.  The first is by a sweep of an exact heat kernel
pagerank~\cite{chung:partitionhkpr:im09} to compare the effects of heat kernel
pagerank computation, and the second by a PageRank
vector~\cite{acl:prgraphpartition:focs06}.  PageRank has a similar expression as
heat kernel pagerank except PageRank is a geometric sum whereas heat kernel
pagerank can be viewed as an exponential sum.  We expect better convergence
rates from heat kernel (see Section~\ref{sec:heatkernelandhkpr}).

\section{Preliminaries}
\label{sec:preliminaries}
Let $G = (V,E)$ be an undirected graph on $n$ vertices and $m$ edges.  We use $u
\sim v$ to denote $\{u,v\} \in E$.  The \emph{degree}, $d_v$, of a vertex $v$ is
the number of vertices $u$ such that $u\sim v$.  The \emph{volume} of a set of
vertices $S \subseteq V$ is the total degree of its vertices, $\vol(S) = \sum_{v
\in S} d_v$, and the \emph{edge boundary} of $S$ is the set of edges with one
vertex in $S$ and the other outside of $S$, $\partial(S) = \{ u\sim v ~:~ u \in
S, v \notin S \}$. When discussing the full vertex set, $V$, we write
$S\subseteq G$ and $\vol(G) = \vol(V)$.

Let $f\in \R^n$ be a row vector over the vertices of $G$.  Then the support of
$f$ is the set of vertices with nonzero values in $f$, $\supp(f) = \{u \in V~:~
f(u)\neq 0\}$.  For a subset of vertices $S$, we define $f(S) = \sum_{u\in S}
f(u)$.

\subsection{A local Cheeger inequality}
The quality of a cut can be measured by the ratio of the number of edges between
the two parts of the cut and the volume of the smaller side of the cut.  This is
called the \emph{Cheeger ratio} of a set, defined by
\begin{equation*}
\Phi(S) = \frac{ |\partial(S)|  }{ \min(\vol(S), \vol(V\setminus S)) }.
\end{equation*}
The \emph{Cheeger constant} of a graph is the minimal Cheeger ratio,
\begin{equation*}
\Phi(G) = \min_{S\subset G}\Phi(S).
\end{equation*}
Finally, for a given subset $S$ of a graph $G$, the \emph{local Cheeger ratio}
is defined
\begin{equation*}
\localcheeg = \min\limits_{T \subseteq S}\Phi(T).
\end{equation*}

Our local clustering algorithm is derived from a local version of the usual
Cheeger inequalities which relate the Cheeger constant of a graph to an
eigenvalue associated to the graph.  Namely, let the normalized Laplacian of a
graph be the matrix $\L = D^{-1/2}(D-A)D^{-1/2}$, where $D$ is the diagonal
matrix of vertex degrees and $A$ is the unweighted, symmetric adjacency matrix.
Also, let $\L_S$ be determined by a subset $S$ of size $|S|=s$ and define $\L_S
= D_S^{-1/2}(D_S-A_S)D_S^{-1/2}$ where $D_S$ and $A_S$ are the restricted
matrices of $D$ and $A$ with rows and columns indexed by vertices in $S$.  Then
the eigenvalues $\lambda_S := \lambda_{S,1} \leq \lambda_{S,2} \leq \cdots \leq
\lambda_{S,s}$ of $\L_S$ are also known as the \emph{Dirichlet eigenvalues of
$S$}, and are related to $\localcheeg$ by the following local Cheeger
inequality~\cite{chung:partitionhkpr:im09}:
\begin{equation}\label{eq:cheegerinequality}
\frac{1}{2}(\localcheeg)^2 \leq \lambda_S \leq \localcheeg.
\end{equation}

The inequality (\ref{eq:cheegerinequality}) will be used to derive a
relationship between a ranking according to heat kernel pagerank and sets with
good Cheeger ratios.  Details will be given in Section~\ref{sec:goodcuts}.

\subsection{Heat kernel and heat kernel pagerank}
\label{sec:heatkernelandhkpr}
The \emph{heat kernel pagerank} vector has entries indexed by the vertices of
the graph and involves two parameters; a non-negative real value $t$,
representing the temperature, and a preference row vector $f: V \rightarrow \R$,
by the following equation:
\begin{equation}\label{eq:hkpr}
\rho_{t,f} = e^{-t} \sum\limits_{k=0}^{\infty} \frac{t^k}{k!}fP^k
\end{equation}
where $P$ is the transition probability matrix
\begin{equation*}
(P)_{uv} =
\begin{cases}
1/d_u & \mbox{ if } u\sim v\\
0 & \mbox{ otherwise}.
\end{cases}
\end{equation*}

When $f$ is a probability distribution, the heat kernel pagerank can be regarded
as the expected distribution of a random walk according to the transition
probability matrix $P$.  A starting distribution we will be particularly
concerned with is that with all probability initially on a single vertex $u$,
i.e. $f = \chi_u$ where $\chi_u$ is the indicator vector for vertex $u$.  We
will denote the heat kernel pagerank vector over this distribution by $\rho_{t,u} :=
\rho_{t,\chi_u}$ and refer to $u$ as the \emph{seed} vertex.

The \emph{heat kernel} of a graph is defined $H_t = e^{-t\Delta}$ where $\Delta$
is the Laplace operator $\Delta = I - P$.  Then an alternative definition for
heat kernel pagerank is $\rho_{t,f} = fH_t$, and we have that heat kernel pagerank
satisfies the heat differential equation
\begin{equation}\label{eq:heateq}
\frac{\partial}{\partial t}\rho_{t,f} = -\rho_{t,f}(I-P).
\end{equation}

We can compare the heat kernel pagerank to the personalized PageRank vector,
given by
\begin{equation}\label{eq:pagerank}
\pr_{\alpha, f} = \alpha \sum\limits_{k=0}^{\infty}(1-\alpha)^kfP^k.
\end{equation}
In this definition, $\alpha$ is often called the \emph{jumping} or \emph{reset}
constant, meaning that at any step the random walk may jump to a vertex taken
from $f$ with probability $\alpha$.  When $f = \chi_u$ for some $u$, i.e.
preference is given to a single vertex, the random walk is ``reset" to the first
vertex of the walk, $u$, with probability $\alpha$.  We note that, compared to
the personalized PageRank vector, which can be viewed as a geometric sum, we can
expect better convergence rates from the heat kernel pagerank, defined as an
exponential sum.

\section{Heat Kernel Pagerank Approximation}
\label{sec:hkprapprox}
We begin our discussion of heat kernel pagerank approximation with an
observation.  Each term in the infinite series defining heat kernel pagerank in
$(\ref{eq:hkpr})$ is of the form $e^{-t}\frac{t^k}{k!}fP^k$ for
$k\in[0,\infty]$.  The vector $fP^k$ is the distribution after $k$ random walk
steps with starting distribution $f$.  Then, if we perform $k$ steps of a random
walk given by transition probability matrix $P$ from starting distribution $f$
with probability $p_k = e^{-t}\frac{t^k}{k!}$, the heat kernel pagerank vector
can be viewed as the expected distribution of this process.

This suggests a natural way to approximate the heat kernel pagerank.  That is,
we can obtain a close approximation to the expected distribution with
sufficiently many samples.  Our algorithm operates as follows.  We perform $r$
random walks to approximate the infinite sum, choosing $r$ large enough to bound
the error.  We also use the fact that very long walks are performed with small
probability.  As such, we limit the lengths of our random walks by a finite
number $K$.  Both $r,K$ depend on a predetermined error bound $\epsilon$.

In our analysis we will use the following definition of an
$\epsilon$-approximate vector.

\begin{definition}
\label{def:eps-approx}
Let $G$ be a graph on $n$ vertices, and let $f:V\rightarrow \R$ be a vector over
the vertices of $G$.  Let $\rho_{t,f}$ be the heat kernel pagerank vector according
to $f$ and $t$.  Then we say that $\nu \in \R^n$ is an
\emph{$\epsilon$-approximate vector} of $\rho_{t,f}$ if
\begin{enumerate}
\item for every vertex $v \in V$ in the support of $\nu$, 
$$(1-\epsilon)\rho_{t,f}(v) -\epsilon \leq \nu(v) \leq (1+\epsilon)\rho_{t,f}(v),$$
\item for every vertex with $\nu(v) = 0$, it must be that $\rho_{t,f}(v) \leq \epsilon$.
\end{enumerate}
\end{definition}

We note that this is a rather coarse requirement for an approximation, but
satisfies our needs for local clustering.  In the following algorithm, we
approximate $\rho_{t,u}$ by an $\epsilon$-approximate vector which we denote by
$\hat{\rho}_{t,u}$.  The running time of the algorithm is $\hkprcomplexity$.  The
method and complexity of the algorithm, $\hkprseedalg$, are similar to the
ApproxRow algorithm for personalized PageRank given
in~\cite{bbct:sublinearpr:waw12}.

\begin{algorithm}[H]
\floatname{algorithm}{}
\caption*{\hkprseedalgparams}
\label{alg:hkprseed}
\algblock[Name]{Start}{End}
input: a graph $G$, $t\in \R^{+}$, seed vertex $u\in V$, error parameter $0 < \epsilon < 1$.\\
output: $\rho$, an $\epsilon$-approximation of $\rho_{t,u}$.\\
\begin{algorithmic}
  \State initialize a $0$-vector $\rho$ of dimension $n$, where $n=|V|$
  \State $r \gets \rparam$
  \State $K \gets c\cdot\kparam$ for some choice of contant $c$\\
  \For {$r$ iterations}
    \Start
      \State simulate a $P$ random walk from vertex $u$ where $k$ steps are taken with probability $e^{-t}\frac{t^k}{k!}$ and $k \leq K$
      \State let $v$ be the last vertex visited in the walk
      \State $\rho[v] \gets \rho[v] + 1$
    \End
  \EndFor\\
  \State
  \Return $1/r \cdot \rho$  
\end{algorithmic} 
\end{algorithm}

\begin{theorem}\label{thm:hkpraccuracy}
Let $G$ be a graph and let $u$ be a vertex of $G$.  Then, the algorithm
\hkprseedalgparams outputs an $\epsilon$-approximate vector $\hat{\rho}_{t,u}$ of the
heat kernel pagerank $\rho_{t,u}$ for $0 < \epsilon < 1$ with probability at least
$1-\epsilon$.  The running time of \hkprseedalg~is $\hkprcomplexity$.
\end{theorem}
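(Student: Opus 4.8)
The plan is to treat the output of \hkprseedalg\ as an empirical average of i.i.d.\ Bernoulli trials and to control it with a truncation estimate together with a Chernoff bound and a union bound. Writing $X_i^{(v)}$ for the indicator that the $i$-th simulated walk terminates at $v$, the returned value is $\hat\rho_{t,u}(v) = \frac1r\sum_{i=1}^r X_i^{(v)}$, so the $X_i^{(v)}$ are i.i.d.\ Bernoulli with common mean $\mu_v := \P[\text{a walk ends at }v]$ and $\E[\hat\rho_{t,u}(v)] = \mu_v$. Because the walk draws its length $k$ from the Poisson weights $e^{-t}t^k/k!$ of (\ref{eq:hkpr}) but caps $k$ at $K$, the mean $\mu_v$ is exactly the \emph{truncated} heat kernel pagerank $\sum_{k=0}^{K} e^{-t}\frac{t^k}{k!}(\chi_u P^k)(v)$, whence $0 \le \rho_{t,u}(v) - \mu_v \le \sum_{k>K} e^{-t}\frac{t^k}{k!} =: \delta_K$ uniformly in $v$.

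First I would bound the truncation error $\delta_K$. This is the tail of the length distribution beyond $K$, and the point of the choice $K = c\cdot\kparam$ is that, by Stirling's estimate, $K! \ge \epsilon^{-1}$ for a suitable constant $c$; feeding this into the tail of the series shows $\delta_K \le \epsilon$ (indeed one can arrange $\delta_K \le \epsilon/2$). This step is purely analytic and fixes $K$; the remaining parameter $r = \rparam$ is then chosen to drive the sampling error below the same threshold.

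The main step is the concentration argument, which I would organize by the size of $\rho_{t,u}(v)$ so as to meet the two clauses of Definition~\ref{def:eps-approx}. Fix $v$ and apply the multiplicative Chernoff bounds $\P[\hat\rho_{t,u}(v) \le (1-\epsilon)\mu_v] \le e^{-\epsilon^2 \mu_v r/2}$ and $\P[\hat\rho_{t,u}(v) \ge (1+\epsilon)\mu_v] \le e^{-\epsilon^2 \mu_v r/3}$. When $\rho_{t,u}(v) \ge \epsilon$ we have $\mu_v \ge \epsilon/2$, so with $r = \rparam$ both exponents are $\Omega(\log n)$ and each tail is at most $n^{-\Omega(1)}$; combined with $\rho_{t,u}(v)-\delta_K \le \mu_v \le \rho_{t,u}(v)$ this yields $(1-\epsilon)\rho_{t,u}(v)-\epsilon \le \hat\rho_{t,u}(v) \le (1+\epsilon)\rho_{t,u}(v)$, i.e.\ clause~1. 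When $\rho_{t,u}(v) < \epsilon$ the lower inequality of clause~1 is automatic because its right-hand side $(1-\epsilon)\rho_{t,u}(v)-\epsilon$ is already nonpositive, and clause~2 (together with the upper estimate for the vertices that do enter the support) follows from the fact that such a vertex is sampled only with small probability, so the coarse additive tolerance of the definition absorbs the deviation. A union bound over the $n$ vertices then makes the total failure probability at most $\epsilon$, which is exactly the asserted success probability $1-\epsilon$.

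Finally, the running time is immediate from the two parameters: the algorithm performs $r = \rparam$ walks, each of length at most $K = O(\kparam)$, and by the stated constant-time assumptions for a single walk step and for sampling the length $k$, the total cost is $O(rK) = \hkprcomplexity$. I expect the genuine difficulty to lie in the concentration step for vertices of \emph{small} heat kernel pagerank: multiplicative Chernoff is vacuous once $\mu_v$ is below $\Theta(1/r)$, so matching the output to clause~2 and to the upper bound of clause~1 is exactly where the deliberately ``coarse'' additive slack in Definition~\ref{def:eps-approx} must be used, rather than sharp concentration; the truncation estimate for $\delta_K$ is the other place where the specific, $t$-insensitive form of $K$ has to be justified with care.
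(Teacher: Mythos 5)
Your proposal follows the same architecture as the paper's proof: view the output as an empirical average of walk-endpoint indicators, truncate walk lengths at $K=\Theta\big(\kparam\big)$, apply the multiplicative Chernoff bounds of Lemma~\ref{lem:chernoff} with the case split $\rho_{t,u}(v)>\epsilon$ versus $\rho_{t,u}(v)\le\epsilon$ (in the first case the truncated mean is at least $\epsilon/2$, so with $r=\rparam$ the exponents are $\Omega(\log n)$ and the failure probabilities are inverse-polynomial in $n$), take a union bound, and charge $O(rK)=\hkprcomplexity$ for the running time. Up to constants this is the paper's argument.

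There is, however, a genuine gap in your truncation step, exactly at the point you yourself flagged as delicate. You assert that choosing $c$ so that $K!\ge\epsilon^{-1}$ forces $\delta_K=\sum_{k>K}e^{-t}t^k/k!\le\epsilon$. That is false for general $t$: $\delta_K$ is the upper tail of a Poisson random variable with mean $t$, and once $t\gg K$ this tail tends to $1$ no matter how large $K!$ is. Since $K=c\cdot\kparam$ depends only on $\epsilon$ while $t\in\R^{+}$ is an arbitrary input --- and the clustering application in Section~\ref{sec:localpartition} uses $t=\tparamcluster$, which is large (around $85$ to $96$ in the paper's experiments) --- your Stirling estimate only covers a bounded-$t$ regime, e.g.\ $t\le 1$, where indeed $\delta_K\le\sum_{k>K}1/k!\le 2/(K+1)!$. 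The paper does not attempt to get the bound from $K!$ alone: its proof splits into the cases $t>K/\log K$ and $t\le K/\log K$, using the decaying factor $e^{-t}t^{K}$ in one case and an expected-walk-length argument in the other. Whatever one thinks of the sharpness of that case analysis, some argument engaging with the dependence on $t$ is indispensable here, and it is missing from your write-up as stated.

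Two smaller discrepancies. First, for support vertices with $\rho_{t,u}(v)\le\epsilon$ you say the ``coarse additive tolerance'' of Definition~\ref{def:eps-approx} absorbs the deviation, but clause~1's upper bound $(1+\epsilon)\rho_{t,u}(v)$ carries no additive slack; only the lower bound does. What can actually be proved --- and what the paper proves, via part~3 of Lemma~\ref{lem:chernoff} with $c\ge 1$ --- is the additive guarantee $\P(\hat{\rho}_{t,u}(v)>2\epsilon)\le n^{-8/\epsilon^{2}}$; you should state that bound explicitly rather than attribute to the definition a tolerance it does not contain. Second, your formula for $\mu_v$ as the truncated sum corresponds to \emph{discarding} walks whose drawn length exceeds $K$, not \emph{capping} them at $K$ steps; under capping, the inequality $\mu_v\le\rho_{t,u}(v)$ can fail, so you should fix one convention and use it consistently throughout the concentration step.
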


\subsection{Analysis of the heat kernel pagerank algorithm}
\label{sec:hkpranalysis}
Our analysis relies on the usual Chernoff bounds as stated below.

\begin{lemma}[\cite{bbct:sublinearpr:waw12}]\label{lem:chernoff}
Let $X_i$ be independent Bernoulli random variables with $X = \sum\limits_{i = 1}^r X_i$.  Then, 
\begin{enumerate}
\item for $0 < \epsilon < 1$, $\P(X < (1-\epsilon)r\E(X)) < \exp(-\frac{\epsilon^2}{2}r\E(X))$
\item for $0 < \epsilon < 1$, $\P(X > (1+\epsilon)r\E(X)) < \exp(-\frac{\epsilon^2}{4}r\E(X))$
\item for $c \geq 1$, $\P(X > (1+c)r\E(X)) < \exp(-\frac{c}{2}r\E(X))$.
\end{enumerate}
\end{lemma}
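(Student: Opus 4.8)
The three inequalities are instances of the multiplicative Chernoff bound, and the plan is to prove all of them by the exponential-moment (Markov) method, handling the two upper-tail estimates (items~2 and~3) by a common computation and item~1 by the mirror-image argument. Throughout, write $\mu = r\,\E(X)$ for the expectation of the sum $X=\sum_{i=1}^r X_i$ (so $\E(X)$ in the statement is read as the common per-trial mean), and let $p_i=\E(X_i)$ so that $\mu=\sum_i p_i$.

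First I would set up the generic upper tail. For any $s>0$, applying Markov's inequality to the monotone transform $e^{sX}$ gives $\P(X>a)\le e^{-sa}\,\E[e^{sX}]$. Independence factorises the moment generating function as $\E[e^{sX}]=\prod_{i=1}^r \E[e^{sX_i}]$, and for a Bernoulli variable $\E[e^{sX_i}]=1+p_i(e^s-1)\le \exp\big(p_i(e^s-1)\big)$ by $1+x\le e^x$, so multiplying over $i$ yields the clean estimate $\E[e^{sX}]\le \exp\big(\mu(e^s-1)\big)$, valid for every $s$. Taking the threshold $a=(1+\delta)\mu$ (with $\delta=\epsilon$ for item~2 and $\delta=c$ for item~3) and minimising the resulting exponent $\mu(e^s-1)-s(1+\delta)\mu$ at $s=\ln(1+\delta)$ produces the canonical bound $\P\big(X>(1+\delta)\mu\big)\le \big(e^{\delta}(1+\delta)^{-(1+\delta)}\big)^{\mu}$. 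The remaining work is purely scalar: to recover item~2 I would verify $(1+\delta)\ln(1+\delta)-\delta\ge \delta^2/4$ for $0<\delta<1$, a short consequence of the expansion $(1+\delta)\ln(1+\delta)=\delta+\delta^2/2-\delta^3/6+\cdots$; and for item~3 the analogous inequality in the regime $c\ge 1$, where $\ln(1+c)$ is large and a linear-in-$c$ lower bound on the exponent suffices.

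For the lower tail (item~1) I would run the mirror argument with a negative exponent: for $s>0$, $\P\big(X<(1-\epsilon)\mu\big)=\P\big(e^{-sX}>e^{-s(1-\epsilon)\mu}\big)\le e^{s(1-\epsilon)\mu}\,\E[e^{-sX}]$, use the same per-variable estimate $\E[e^{-sX_i}]\le \exp\big(p_i(e^{-s}-1)\big)$, and minimise at $s=-\ln(1-\epsilon)$ to obtain $\P\big(X<(1-\epsilon)\mu\big)\le \big(e^{-\epsilon}(1-\epsilon)^{-(1-\epsilon)}\big)^{\mu}$. This reduces to the scalar inequality $(1-\epsilon)\ln(1-\epsilon)\ge \epsilon^2/2-\epsilon$ on $(0,1)$, which delivers exactly the claimed factor $\exp(-\epsilon^2\mu/2)$.

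The genuinely non-automatic part is this last step in each case: establishing the elementary logarithmic inequalities with precisely the constants claimed, since these do not fall out of the generic Chernoff machinery and must be checked by hand. Concretely, I would study $g(\delta)=(1+\delta)\ln(1+\delta)-\delta$ and its lower-tail counterpart, comparing them against the target quadratic (resp.\ linear) lower bound via convexity or term-by-term Taylor control of the remainder. This is where the specific numerical factors and the split between the $\delta<1$ and $c\ge 1$ regimes genuinely originate, and I would flag it as the one place where care with the constants — rather than mechanical application of the method — is required.
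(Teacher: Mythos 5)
The paper itself offers no proof of this lemma---it is imported verbatim from \cite{bbct:sublinearpr:waw12}---so your proposal can only be judged against the standard derivation, and for parts~1 and~2 it is exactly right. You read the ambiguous notation $r\E(X)$ correctly as the mean $\mu$ of the sum, and the exponential-moment route with $s=\ln(1+\delta)$ (resp.\ $s=-\ln(1-\epsilon)$) together with your scalar inequalities does close: writing $(1+\delta)\ln(1+\delta)-\delta=\sum_{k\geq 2}(-1)^k\delta^k/(k(k-1))$, the alternating series gives $(1+\delta)\ln(1+\delta)-\delta \geq \delta^2/2-\delta^3/6 \geq \delta^2/3 \geq \delta^2/4$ on $(0,1)$, and for the lower tail $(1-\epsilon)\ln(1-\epsilon)+\epsilon=\sum_{k\geq 2}\epsilon^k/(k(k-1)) \geq \epsilon^2/2$, which are precisely the constants claimed in parts~2 and~1.

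Part~3, however, contains a genuine gap, and it sits exactly at the step you waved through with ``a linear-in-$c$ lower bound on the exponent suffices.'' The optimal Chernoff exponent at threshold $(1+c)\mu$ is $\big((1+c)\ln(1+c)-c\big)\mu$, and the scalar inequality you need, $(1+c)\ln(1+c)-c \geq c/2$, is \emph{false} near the left endpoint of the stated range: at $c=1$ it reads $2\ln 2-1 \approx 0.386 \geq 0.5$, and it only becomes true for $c \gtrsim 1.4$. No choice of $s$ can rescue this, since $s=\ln(1+c)$ already minimizes the bound; worse, the statement itself is essentially sharp-false there, because in the Poisson limit (Bernoulli sums with $p\to 0$, $\mu$ fixed) one has $\P(X>2\mu) \approx e^{-(2\ln 2-1)\mu}/\sqrt{4\pi\mu}$, which exceeds $e^{-\mu/2}$ for large $\mu$. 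So part~3 as quoted cannot be proved as stated for all $c\geq 1$; what your method does deliver is either the same bound restricted to $c\geq 2$ (via the Bernstein form $\exp(-\tfrac{c^2}{2+c}\mu)$, since $\tfrac{c^2}{2+c}\geq \tfrac{c}{2}$ there), or the weaker constant $\exp(-\tfrac{c}{3}\mu)$ valid for all $c\geq 1$ (since $(1+c)\ln(1+c)-c \geq (2\ln 2-1)c > c/3$ on $c\geq 1$, the left side being $0$ at $c=1$ after subtracting and increasing thereafter). Either repair suffices for the only use the paper makes of part~3---the $\P(\hat{\rho}_{t,u}(v)>2\epsilon)$ estimate remains superpolynomially small with $1/3$ in place of $1/2$---but your writeup should state the corrected range or constant rather than assert the inequality holds on all of $c\geq 1$.
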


\begin{proof}[Theorem~\ref{thm:hkpraccuracy}]
Consider the random variable which takes on value $fP^k$ with probability $p_k =
e^{-t}\frac{t^k}{k!}$ for $k\in [0, \infty)$.  The expectation of this random
variable is exactly $\rho_{t,f}$.  Heat kernel pagerank can be understood as a series
of distributions of weighted random walks over the vertices, and the weights are
related to the number of steps taken in the walk.  The series can be computed by
simulating this process, i.e., draw $k$ according to $p_k$ and compute $fP^k$
with sufficiently many random walks of length $k$.

We approximate the infinite sum by limiting the walks to at most $K$ steps.  We
will take $K$ to be $K= \kparam$.  These interrupts risk the loss of
contribution to the expected value, but can be upper bounded by
$\frac{e^{-t}t^K}{K!} \leq \frac{\epsilon}{2}$ provided that $t > K/\log K$.
This is within the error bound for an approximate heat kernel pagerank.  If $t
\leq K/ \log K$, the expected length of the random walk is
\begin{align*}
\sum_{k=0}^{\infty} \frac{e^{-t}t^k}{k!} \cdot k = t < K/\log K.
\end{align*}
Thus we can ignore walks of length more than $K$ while maintaining $\rho_{t,u}(v) -
\epsilon \leq \hat{\rho}_{t,u}(v) \leq \rho_{t,u}(v)$ for every vertex $v$.

Next we show how many samples are necessary for our approximation vectors.  For
$k\leq K$, our algorithm simulates $k$ random walk steps with probability
$e^{-t}\frac{t^k}{k!}$.  To be specific, for a fixed $u$, let $X^v_k$ be the
indicator random variable defined by $X^v_k = 1$ if a random walk beginning from
vertex $u$ ends at vertex $v$ in $k$ steps.  Let $X^v$ be the random variable
that considers the random walk process ending at vertex $v$ in \emph{at most}
$k$ steps.  That is, $X^v$ assumes the vector $X^v_k$ with probability
$e^{-t}\frac{t^k}{k!}$.  Namely, we consider the combined random walk
\begin{equation*}
X^v=\sum_{k \leq K} e^{-t}\frac{t^k}{k!} X^v_k.
\end{equation*}

Now, let $\rho(k)_{t,u}$ be the contribution to the heat kernel pagerank vector
$\rho_{t,u}$ of walks of length at most $k$.  The expectation of each $X^v$ is
$\rho(k)_{t,u}(v)$.  Then, by Lemma~\ref{lem:chernoff},
\begin{align*}
\P(X^v < (1-\epsilon) \rho(k)_{t,u}(v)\cdot r) &< \exp(-\rho(k)_{t,u}(v)r\epsilon^2/2)\\
&= \exp(-(8/\epsilon)\rho(k)_{t,u}(v)\log n)\\
&< n^{-4}
\end{align*}
for every component with $\rho_{t,u}(v) > \epsilon$, since then $\rho(k)_{t,u}(v) >
\epsilon/2$.  Similarly,
\begin{align*}
\P(X^v > (1+\epsilon) \rho(k)_{t,u}(v)\cdot r) &< \exp(-\rho(k)_{t,u}(v)r\epsilon^2/4)\\
&= \exp(-(4/\epsilon)\rho(k)_{t,u}(v)\log n)\\
&< n^{-2}.
\end{align*}
We conclude the analysis for the support of $\rho_{t,u}$ by noting that $\hat{\rho}_{t,u}
= \frac{1}{r} X^v$, and we achieve an $\epsilon$-multiplicative error bound for
every vertex $v$ with $\rho_{t,u}(v) > \epsilon$ with probability at least
$1-O(n^{-2})$.

On the other hand, if $\rho_{t,u}(v)\leq\epsilon$, by the third part of
Lemma~\ref{lem:chernoff}, $\P(\hat{\rho}_{t,u}(v) > 2\epsilon) \leq
n^{-8/\epsilon^2}$.  We may conclude that, with high probability, $\hat{\rho}_{t,u}(v)
\leq 2\epsilon$.

For the running time, we use the assumptions that performing a random walk step
and drawing from a distribution with bounded support require constant time.
These are incorporated in the random walk simulation, which dominates the
computation.  Therefore, for each of the $r$ rounds, at most $K$ steps of the
random walk are simulated, giving a total of $rK =  O\Big(\rparam \cdot
\kparam\Big)= \tilde {O}(1)$ queries.
\qed\end{proof}

\begin{remark}
This bound on $K$ is not tight.  However, it is enough to use $cK$ for some
small constant $c$ to cluster vertices with $\epsilon$-approximate heat kernel
pagerank vectors computed with bounded random walks.  Regardless, this value
$O(K)$ is independent of the size of the graph and never affects the running
time.  See Section~\ref{sec:rankings} for a futher discussion.
\end{remark}

\begin{remark}
We note that the algorithm works for any $t$, but a good choice of $t$ will be
related to the size of the local cluster $S$ and a desirable convergence rate.
In particular, the constraints put on $t$ are necessary for our local clustering
results, presented in Section~\ref{sec:goodcuts}.
\end{remark}

The algorithm for efficient heat kernel pagerank computation has promise for a
variety of applications.  It has been shown in~\cite{cs:hklinear:13} how to
apply heat kernel pagerank in solving symmetric diagonally dominant linear
systems with a boundary condition, for example.

\section{Finding Good Local Cuts}
\label{sec:goodcuts}
The premise of the local clustering algorithm is to find a good cut near a
specified vertex by performing a \emph{sweep} over a vector associated to that
vertex, which we will specify.  Let $p:V\rightarrow \R$ be a probability
distribution vector over the vertices of the graph of support size $N_p =
\supp(p)$.  Then, consider a \emph{probability-per-degree} ordering of the
vertices where $p(v_1)/d_{v_1} \geq p(v_2)/d_{v_2} \geq \cdots \geq
p(v_{N_p})/d_{v_{N_p}}$.  Let $S_i$ be the set of the first $i$ vertices per the
ordering.  We call each $S_i$ a \emph{segment}.  Then the process of
investigating the cuts induced by the segments to find an optimal cut is called
performing a sweep over $p$.  

In this section we will show how a sweep over a single heat kernel pagerank
vector finds local cuts.  Specifically, we show that for a subset $S$ with
$\vol(S) \leq \vol(G)/4$ and $\Phi(S) \leq \phi$, and for a large number of
vertices $u\in S$, performing a sweep over the vector $\hat{\rho}_{t,u}$, where
$\hat{\rho}_{t,u}$ is an $\epsilon$-approximation of $\rho_{t,u}$, will find a set with
Cheeger ratio at most $O(\sqrt{\phi})$.

\begin{remark}
Though all the vertices in the support of the vector are sorted to build
segments, in practice the sweep will be aborted after the volume of the current
segment is larger than the target size.  This is the \emph{locality} of the
algorithm, and ensures that the amount of work performed is proportional to the
volume of the output set.
\end{remark}

The $\varsigma$-local Cheeger ratio of a sweep over a vector $\nu$ is the
minimum Cheeger ratio over segments $S_i$ with volume $0 \leq \vol(S_i)\leq
2\varsigma$.  Let $\Phi_{\varsigma}(\nu)$ the $\varsigma$-local Cheeger ratio of
cuts over a sweep of $\nu$ that separates sets of volume between $0$ and
$2\varsigma$.

We will use the following bounds for heat kernel pagerank in terms of local
Cheeger ratios and sweep cuts to reason that many vertices $u$ can serve as good
seeds for performing a sweep.

\begin{lemma}
\label{thm:volumegoodset}
Let $G$ be a graph and $S$ a subset of vertices of volume $\varsigma \leq
\rm{vol}(G)/4$.  Then the set of $u\in S$ satisfying
\begin{equation*}
\frac{1}{2}e^{-t\localcheeg} 
\leq \rho_{t,u}(S)
\leq \sqrt{\varsigma}e^{-t\Phi_{\varsigma}(\rho_{t,f_S})^2/4}
\end{equation*}
has volume at least $\varsigma/2$.
\end{lemma}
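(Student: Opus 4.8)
The starting point is the linearity of heat kernel pagerank in its seed: since $\rho_{t,f} = fH_t = \sum_u f(u)\rho_{t,u}$, I would choose $f_S$ to be the degree distribution restricted to $S$ (that is, $f_S(v) = d_v/\varsigma$ for $v\in S$ and $0$ otherwise), which makes $\rho_{t,f_S}(S) = \tfrac{1}{\varsigma}\sum_{u\in S} d_u\,\rho_{t,u}(S)$ exactly the volume-weighted average over seeds $u\in S$ of the quantities $\rho_{t,u}(S)$. The plan is then to (a) sandwich this average between the two displayed quantities, and (b) convert each side into a statement about the \emph{volume} of good seeds via Markov-type arguments, closing with a union bound over the two ``bad'' sets.

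For the lower bound on the average, I would compare $\rho_{t,u}(S)$ — the probability that the heat-kernel walk from $u$ lies in $S$ at time $t$ — with the Dirichlet heat kernel $H_t^S$ that kills the walk upon leaving $S$; since staying inside $S$ throughout is more restrictive than merely being in $S$ at time $t$, the Dirichlet quantity is a pointwise lower bound for $\rho_{t,u}(S)$. Expanding $f_S$ in the orthonormal basis of Dirichlet eigenvectors of $\L_S$ and retaining the slowest-decaying mode gives $\rho_{t,f_S}(S)\gtrsim e^{-t\lambda_S}$, and the local Cheeger inequality~\eqref{eq:cheegerinequality}, $\lambda_S\le\localcheeg$, upgrades this to $\rho_{t,f_S}(S)\gtrsim e^{-t\localcheeg}$. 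A reverse-Markov argument — using the trivial pointwise bound $\rho_{t,u}(S)\le 1$ — then shows that the volume of seeds for which $\rho_{t,u}(S)$ falls below $\tfrac12 e^{-t\localcheeg}$ is small.

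For the upper bound I would invoke the sweep / Lov\'asz--Simonovits machinery applied to the vector $\rho_{t,f_S}$ itself: if every sweep cut separating volume at most $2\varsigma$ has conductance at least $\Phi_\varsigma(\rho_{t,f_S})$, then the heat cannot stay concentrated on $S$ and must decay at the corresponding spectral rate $e^{-t\Phi_\varsigma(\rho_{t,f_S})^2/4}$, while a Cauchy--Schwarz ($L^1$--$L^2$) conversion over the $\varsigma$ coordinates of $S$ supplies the $\sqrt{\varsigma}$ factor. A Markov argument then bounds the volume of seeds exceeding $\sqrt{\varsigma}e^{-t\Phi_\varsigma(\rho_{t,f_S})^2/4}$. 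Combining the lower-bad and upper-bad sets by a union bound leaves a set of volume at least $\varsigma/2$ on which both inequalities hold simultaneously.

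The main obstacle I expect is the lower bound's dependence on $\localcheeg$ rather than on the larger, weaker cut ratio $\Phi(S)$: a naive heat-escape estimate via the heat equation~\eqref{eq:heateq} only gives decay at rate $\Phi(S)$, which is too weak to produce the claimed $e^{-t\localcheeg}$. The argument must therefore genuinely route through the Dirichlet eigenvalue $\lambda_S$ and the local Cheeger inequality, and must control the overlap $\langle \phi_1, D_S^{1/2}\mathbf{1}\rangle$ between the Perron Dirichlet eigenvector and the degree distribution in order to pin down the constant $\tfrac12$ and make the averaging step land at volume exactly $\varsigma/2$. The secondary technical point is the upper bound's reliance on the sweep conductance of the specific vector $\rho_{t,f_S}$ together with the volume bookkeeping that forces both Markov steps to close at $\varsigma/2$.
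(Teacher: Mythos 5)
Your architecture is the same as the paper's: the degree-weighted averaging identity $\rho_{t,f_S}(S)=\sum_{u\in S}\frac{d_u}{\vol(S)}\rho_{t,u}(S)$ is the paper's equation~(\ref{eq:ehkpru}), the lower bound on this average through the Dirichlet eigenvalue $\lambda_S$ and the local Cheeger inequality is Lemma~\ref{lem:lowerbound}, the upper bound through sweep-cut conductance of $\rho_{t,f_S}$ itself is Theorem~\ref{thm:eupperbound} (the paper extracts the $\sqrt{\varsigma}$ from the comparison curve $g_t(x)=\sqrt{x}\,e^{-t\Phi_{\varsigma}(\rho_{t,f_S})^2/4}$ rather than from Cauchy--Schwarz, but that difference is cosmetic), and the passage from the average to individual seeds is by a Markov-type argument, as you propose. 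The genuine gap is in that last passage, on exactly the side you flag as the main obstacle.

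The reverse-Markov step for the lower tail is vacuous. Let $a=\rho_{t,f_S}(S)$ and $B=\bigl\{u\in S:\rho_{t,u}(S)<\tfrac12 e^{-t\localcheeg}\bigr\}$. Reverse Markov with the ceiling $\rho_{t,u}(S)\le 1$ gives only
\[
\vol(B)\;\le\;\varsigma\cdot\frac{1-a}{1-\tfrac12 e^{-t\localcheeg}},
\]
and the only available guarantee on the mean is $a\ge\tfrac12 e^{-t\lambda_S}$, which never exceeds $\tfrac12$; consequently the right-hand side never drops below $\varsigma/2$, and it approaches $\varsigma$ as $t\lambda_S$ grows. No averaging argument can repair this: a mean guarantee of at most $\tfrac12$ is consistent with $\rho_{t,u}(S)$ vanishing on half the volume of $S$, so the per-seed lower bound simply does not follow from the sandwich on the average. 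A secondary arithmetic problem is that Markov at threshold twice the mean bounds each bad set only by $\varsigma/2$, so your union bound over two bad sets can leave nothing; you would need thresholds at four times the mean, which changes the constants in the displayed inequality. For comparison, the paper's own proof performs only the one-sided, upper-tail step: it defines $F=\{u\in S:\rho_{t,u}(S)\le 2\rho_{t,f_S}(S)\}$, shows $\vol(F)>\varsigma/2$ by Markov, and stops there; it never derives the per-seed lower bound $\tfrac12 e^{-t\localcheeg}\le\rho_{t,u}(S)$ for $u\in F$, and it leaves unaddressed the factor of $2$ that its Markov step attaches to the upper bound. So your diagnosis of where the difficulty lies is correct, but the mechanism you propose does not resolve it --- and you should be aware that the paper's own argument quietly fails to resolve it as well.
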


To proof Lemma~\ref{thm:volumegoodset}, we begin with some bounds for heat
kernel pagerank in terms of local Cheeger ratios and sweep cuts.  For a subset
$S$, define $f_S$ to be the following distribution over the vertices,
\begin{equation*}
f_S(u) =
\begin{cases}
d_u/\vol(S) & \mbox{ if }u\in S\\
0 & \mbox{ otherwise.}
\end{cases}
\end{equation*}
Then the expected value of $\rho_{t,u}(S)$ over $u$ in $S$ is given by:
\begin{align}
\E(\rho_{t,u}(S)) &= \sum\limits_{u\in S}\frac{d_u}{\vol(S)}\rho_{t,u}(S)\nonumber\\
&= \sum\limits_{u\in S}\frac{d_u}{\vol(S)}(\chi_u H_t)(S)\nonumber\\
&= f_S H_t(S)\nonumber\\
&= \rho_{t,f_S}(S).\label{eq:ehkpru}
\end{align}

We will make use of the following result, given here without proof
(see~\cite{chung:partitionhkpr:im09}), which bounds the expected value of
$\rho_{t,u}(S)$ given by (\ref{eq:ehkpru}) in terms of local Cheeger ratios.

\begin{lemma}[\cite{chung:partitionhkpr:im09}]
\label{lem:lowerbound}
In a graph $G$, and for a subset $S$, the following holds:
\[
\frac{1}{2}e^{-t\localcheeg} \leq \frac{1}{2}e^{-t\lambda_S} \leq \rho_{t,f_S}(S).
\]
\end{lemma}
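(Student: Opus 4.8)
We want to prove Lemma~\ref{thm:volumegoodset}: for a subset $S$ of volume $\varsigma \leq \vol(G)/4$, the set of seed vertices $u\in S$ satisfying both
\begin{equation*}
\tfrac{1}{2}e^{-t\localcheeg} \leq \rho_{t,u}(S) \leq \sqrt{\varsigma}\,e^{-t\Phi_{\varsigma}(\rho_{t,f_S})^2/4}
\end{equation*}
has volume at least $\varsigma/2$. The plan is to handle the lower and upper bounds separately, showing that each is violated only on a set of small volume, and then to combine these two ``bad'' sets via a union bound on volume. The key structural observation is that the averaged quantity $\E(\rho_{t,u}(S)) = \rho_{t,f_S}(S)$ from~(\ref{eq:ehkpru}) is already controlled: Lemma~\ref{lem:lowerbound} pins it from below by $\tfrac{1}{2}e^{-t\localcheeg}$, and I will need a matching upper bound on this same average in terms of the sweep quantity $\Phi_{\varsigma}(\rho_{t,f_S})$. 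So the real content is a Markov-type argument that transfers bounds on the $f_S$-average of $\rho_{t,u}(S)$ to bounds that hold for most individual seeds $u$, weighted by degree.

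First I would establish the lower bound. By~(\ref{eq:ehkpru}) and Lemma~\ref{lem:lowerbound} we have $\E_{u\sim f_S}(\rho_{t,u}(S)) = \rho_{t,f_S}(S) \geq \tfrac{1}{2}e^{-t\localcheeg}$, where the expectation is over $u$ drawn with probability $d_u/\vol(S)$. The subtle point is that an average being large does not force most summands to be large. I would instead argue monotonicity or use the reversibility/symmetry of the heat kernel to show that $\rho_{t,u}(S)$ cannot be too small for most high-degree seeds; concretely, one expects $\rho_{t,u}(S)$ to be large precisely for those $u$ sitting deep inside $S$, so the seeds violating the lower bound form a boundary-like set of small volume. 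This is the step I expect to be the main obstacle, since it requires either a pointwise comparison argument or a second-moment control that the excerpt does not hand us directly.

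For the upper bound I would argue by contradiction against the definition of the sweep quantity $\Phi_{\varsigma}(\rho_{t,f_S})$. If too many seeds $u$ had $\rho_{t,u}(S) > \sqrt{\varsigma}\,e^{-t\Phi_{\varsigma}(\rho_{t,f_S})^2/4}$, then averaging against $f_S$ would push $\rho_{t,f_S}(S)$ above the bound that the local Cheeger inequality~(\ref{eq:cheegerinequality}) permits for a distribution whose sweep cuts all have Cheeger ratio at least $\Phi_{\varsigma}(\rho_{t,f_S})$. The plan is to use the heat equation~(\ref{eq:heateq}) together with~(\ref{eq:cheegerinequality}) to obtain the differential inequality $\frac{\partial}{\partial t}\rho_{t,f_S}(S) \leq -\tfrac{1}{2}\Phi_{\varsigma}(\rho_{t,f_S})^2\,\rho_{t,f_S}(S)$, integrate it to get the Gronwall-type bound $\rho_{t,f_S}(S) \leq \sqrt{\varsigma}\,e^{-t\Phi_{\varsigma}(\rho_{t,f_S})^2/4}$ on the average, and then convert this into a statement about individual seeds by a Markov inequality on the degree-weighted distribution $f_S$.

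Finally I would combine the two pieces. Let $B_{\mathrm{low}}$ be the set of $u\in S$ violating the lower bound and $B_{\mathrm{high}}$ the set violating the upper bound. The contradiction/Markov arguments give $\vol(B_{\mathrm{low}}) \leq \varsigma/4$ and $\vol(B_{\mathrm{high}}) \leq \varsigma/4$, so that $\vol(B_{\mathrm{low}} \cup B_{\mathrm{high}}) \leq \varsigma/2$, and the complementary set of good seeds has volume at least $\vol(S) - \varsigma/2 = \varsigma/2$, as claimed. The constant $\tfrac{1}{2}$ in the lower bound and the normalization $\varsigma \leq \vol(G)/4$ are exactly what make the two quarter-volume bad sets fit inside a half-volume budget, so I would track those constants carefully when tuning the Markov thresholds.
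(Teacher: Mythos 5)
There is a genuine gap here, and it is a fundamental one: you have proved (or rather, sketched) the wrong statement. The statement under review is Lemma~\ref{lem:lowerbound}, the chain of inequalities
\[
\frac{1}{2}e^{-t\localcheeg} \leq \frac{1}{2}e^{-t\lambda_S} \leq \rho_{t,f_S}(S),
\]
but your proposal opens with ``We want to prove Lemma~\ref{thm:volumegoodset}'' and proceeds to outline the Markov-type argument that transfers bounds on the average $\E(\rho_{t,u}(S)) = \rho_{t,f_S}(S)$ to bounds for most individual seeds $u$. Worse, your outline explicitly \emph{invokes} Lemma~\ref{lem:lowerbound} as a known ingredient (``Lemma~\ref{lem:lowerbound} pins it from below''), so relative to the actual task your argument is circular: nothing in the proposal establishes either inequality of the lemma itself. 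For context, the paper also does not prove this lemma --- it is quoted from~\cite{chung:partitionhkpr:im09} and stated ``without proof'' --- so a blind attempt needed to supply the underlying argument from scratch.

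Concretely, what is missing: the first inequality is the easy half --- it follows from the local Cheeger inequality~(\ref{eq:cheegerinequality}), which gives $\lambda_S \leq \localcheeg$, together with monotonicity of $x \mapsto e^{-tx}$ for $t \geq 0$. The second inequality, $\frac{1}{2}e^{-t\lambda_S} \leq \rho_{t,f_S}(S)$, is where the real work lies, and none of it appears in your proposal. The standard route (as in~\cite{chung:partitionhkpr:im09}) is to lower-bound $\rho_{t,f_S}(S)$ by the corresponding Dirichlet quantity --- the heat kernel restricted to $S$, i.e., random walks killed upon leaving $S$, whose generator has the Dirichlet eigenvalues $\lambda_S = \lambda_{S,1} \leq \cdots \leq \lambda_{S,s}$ of $\L_S$ --- and then expand $D_S^{1/2}f_S^T$ (equivalently, the degree-proportional distribution on $S$) in the Dirichlet eigenbasis, showing that the weight surviving with decay rate $e^{-t\lambda_S}$ accounts for at least the constant $\frac{1}{2}$. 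Your sketch contains no spectral decomposition, no Dirichlet restriction, and no mechanism producing the factor $\frac{1}{2}$; the averaging and Markov machinery you describe operates one level up (seeds $u$ versus the average over $f_S$) and cannot substitute for this eigenvalue argument.
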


Next, we use an upper bound on the amount of probability remaining in $S$ after
sufficient mixing.  This is an extension of a theorem given
in~\cite{chung:partitionhkpr:im09}.

\begin{theorem}
\label{thm:eupperbound}
Let $G$ be a graph and $S$ a subset of vertices with volume $\varsigma \leq
\rm{vol}(G)/4$.  Then,
\[ \rho_{t,f_S}(S) \leq \sqrt{\varsigma}e^{-t\Phi_{\varsigma}(\rho_{t,f_S})^2/4}.  \]
\end{theorem}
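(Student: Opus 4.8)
The plan is to run a continuous-time version of the Lovász–Simonovits curve argument on the heat-kernel diffusion started at $f_S$. Write $p_s = \rho_{s,f_S}$, so that $p_0 = f_S$ and $p_t = \rho_{t,f_S}$, and recall from~(\ref{eq:heateq}) that $\partial_s p_s = -p_s(I-P)$. For each $s$ define the concave Lovász–Simonovits curve
\[
I(x,s) = \max\Big\{\, \textstyle\sum_v w(v)\,p_s(v) \;:\; 0\le w(v)\le 1,\ \sum_v w(v)\,d_v = x \,\Big\},
\]
whose breakpoints occur exactly at the volumes $\vol(S_j)$ of the sweep sets $S_j$ of the probability-per-degree ordering of $p_s$, and at which $I(\vol(S_j),s)=p_s(S_j)$. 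Since $S$ has volume $\varsigma$, we have $\rho_{t,f_S}(S)=p_t(S)\le I(\varsigma,t)$, so it suffices to bound $I(\varsigma,t)$.

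The core step is a differential inequality for $I$ at its breakpoints. Fixing a breakpoint $x=\vol(S_j)$ and using the envelope theorem (the maximizer $w=\mathbf 1_{S_j}$ is locally optimal), the heat equation gives
\[
\partial_s I(x,s) \;=\; \partial_s\, p_s(S_j) \;=\; -\!\!\sum_{\{u,v\}\in\partial(S_j),\,u\in S_j}\Big(\tfrac{p_s(u)}{d_u}-\tfrac{p_s(v)}{d_v}\Big).
\]
Because $S_j$ is a sweep set, every boundary edge runs from a higher to a lower probability-per-degree vertex, so each summand is nonnegative and the mass in $S_j$ is nonincreasing --- this is where the continuous generator replaces the laziness needed in the discrete argument. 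I then lower-bound this sum by the concave ``dip'' of the curve: summing the slope gaps across the $|\partial(S_j)|$ boundary edges and invoking concavity of $I(\cdot,s)$ yields
\[
\partial_s I(x,s) \;\le\; -\Big(I(x,s)-\tfrac12 I(x-b,s)-\tfrac12 I(x+b,s)\Big),
\]
where $b=|\partial(S_j)|\ge \Phi_{\varsigma}(\rho_{t,f_S})\cdot x$ by definition of the $\varsigma$-local sweep ratio, valid precisely on the range $x\le 2\varsigma$ that this ratio certifies.

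Finally I compare $I(\cdot,s)$ against the reference curve $\sqrt{x}$. Writing $I(x,s)\le c(s)\sqrt x$ and using concavity of the square root together with $b\ge\Phi_{\varsigma}(\rho_{t,f_S})\, x$, the concave dip $\sqrt x-\tfrac12\sqrt{x-b}-\tfrac12\sqrt{x+b}$ is of order $\Phi_{\varsigma}(\rho_{t,f_S})^2\sqrt x$, so the dip inequality collapses to a scalar differential inequality $c'(s)\le -\tfrac14\Phi_{\varsigma}(\rho_{t,f_S})^2\,c(s)$, giving $c(s)\le c(0)\,e^{-s\Phi_{\varsigma}(\rho_{t,f_S})^2/4}$. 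The initial curve is $I(x,0)=\min(x/\varsigma,1)$, which lies below $\sqrt x$ for all $x\ge 0$ (using $\varsigma\ge 1$), so $c(0)=1$; evaluating at $x=\varsigma$, $s=t$ gives $\rho_{t,f_S}(S)\le I(\varsigma,t)\le \sqrt{\varsigma}\,e^{-t\,\Phi_{\varsigma}(\rho_{t,f_S})^2/4}$, as claimed.

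I expect the main obstacle to be making the continuous-time curve inequality fully rigorous: $I(\cdot,s)$ is only piecewise linear and its breakpoints and vertex ordering move with $s$, so the differentiation step needs the envelope theorem together with one-sided derivatives and care at the times when two probability-per-degree values cross. A secondary difficulty is that the conductance lower bound $b\ge\Phi_{\varsigma}(\rho_{t,f_S})\,x$ is only guaranteed on sweep volumes up to $2\varsigma$, so propagating the reference-curve bound $I\le c(s)\sqrt x$ must be organized so that only breakpoints inside this certified range are used --- which is exactly what produces the $\sqrt{\varsigma}$ prefactor rather than a uniform exponential decay --- and pinning the dip constant to the stated $1/4$ requires the precise Lovász–Simonovits estimate rather than the crude expansion sketched above.
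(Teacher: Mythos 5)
Your proposal is essentially the paper's own argument: the paper's proof is precisely a continuous-time Lov\'asz--Simonovits curve argument, defining the curve $f_t(x)$ (your $I(x,s)$), deriving from the heat equation (\ref{eq:heateq}) a differential inequality at sweep volumes $x \le 2\varsigma$, and comparing against $g_t(x)=\sqrt{x}\,e^{-t\Phi_{\varsigma}^2/4}$ via the inequality $-2+\sqrt{1+y}+\sqrt{1-y}\le -y^2/4$. The only structural difference is where the chord inequality enters: the paper writes $I-P=2(I-Z)$ for the lazy walk $Z=\tfrac12(I+P)$ and applies the chord bound to $f_tZ$, whereas you expand $-p_s(I-P)$ directly as a sum over the boundary edges of the sweep set $S_j$; these computations are equivalent.

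There is, however, one concrete quantitative slip: your dip inequality $\partial_s I(x,s)\le -\bigl(I(x,s)-\tfrac12 I(x-b,s)-\tfrac12 I(x+b,s)\bigr)$ is weaker by a factor of $2$ than what your own chord argument provides, and as sketched it only yields the decay $e^{-t\Phi_{\varsigma}^2/8}$, not the claimed $e^{-t\Phi_{\varsigma}^2/4}$. The fix is internal to your argument, not a ``sharper LS estimate'' as your closing remark suggests. The chord at $x-b$ (delete the inner endpoints of boundary edges, removing weight $b_u/d_u$ at each $u\in S_j$) gives $\sum_{\partial(S_j)} p_s(u)/d_u \ge I(x,s)-I(x-b,s)$, and the chord at $x+b$ (adjoin the outer endpoints) gives $\sum_{\partial(S_j)} p_s(v)/d_v \le I(x+b,s)-I(x,s)$; subtracting, the boundary sum is at least $2I(x,s)-I(x-b,s)-I(x+b,s)$, so in fact $\partial_s I(x,s)\le -2\bigl(I(x,s)-\tfrac12 I(x-b,s)-\tfrac12 I(x+b,s)\bigr)$. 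Combined with $1-\tfrac12\sqrt{1+y}-\tfrac12\sqrt{1-y}\ge y^2/8$, this gives $c'(s)\le -\tfrac14\Phi_{\varsigma}^2\,c(s)$, exactly the stated rate; in the paper this factor of $2$ enters automatically through $I-P=2(I-Z)$. Everything else --- the restriction to breakpoints of volume at most $2\varsigma$, the initial comparison $\min(x/\varsigma,1)\le\sqrt{x}$, and your caveats about breakpoints moving in time --- matches the paper, which glosses over the same technical points.
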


To prove Theorem~\ref{thm:eupperbound}, we define the following for an arbitrary
function $f:V\rightarrow \R$ and any integer $x$ with $0\leq x \leq \vol(G)/2$,
\begin{equation*}
f(x) = \max\limits_{T\subseteq V\times V, |T|=x} \sum\limits_{(u,v)\in T} f(u,v), ~~~
f(u,v)=
\begin{cases}
f(u)/d_u, \mbox{ if $u \sim v$,}\\
0, \mbox{ otherwise}.
\end{cases}
\end{equation*}
The above definition can be extended to all real values of $x$,
\begin{equation*}
f(x) = \max\limits_{T\subseteq V\times V, |T|=x} \sum\limits_{(u,v)\in T} \alpha_{uv}f(u,v), ~~~
\alpha_{uv} \leq 1 \mbox{ if }u\sim v, ~~\sum_{u\sim v}\alpha_{uv} = x.
\end{equation*}

\begin{claim}\label{claim:f}
Let $S_i$ be a segment according to a vector $f:V\rightarrow\R$ such that
$x=\rm{vol}(S_i)$ and $f(v) > 0$ for every $v\in S_i$.  Then $$f(x) = \sum_{u\in
S_i}f(u) = f(S_i).$$
\end{claim}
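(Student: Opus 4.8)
The plan is to recognize the defining quantity $f(x)$ as the optimum of a fractional knapsack problem and to solve it greedily. The crucial structural observation is that the edge-value $f(u,v)$ depends only on the \emph{source} vertex: for every neighbor $v$ of $u$ it equals $c_u := f(u)/d_u$, and $u$ is the source of exactly $d_u$ ordered pairs $(u,v)$ with $u\sim v$. Writing $\beta_u := \sum_{v\sim u}\alpha_{uv}$ for the total weight a feasible selection places on pairs emanating from $u$, one has $0\le \beta_u \le d_u$ and $\sum_u \beta_u = x$, and the objective collapses to $\sum_{(u,v)}\alpha_{uv}f(u,v) = \sum_u c_u\beta_u$. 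Thus $f(x)$ equals the maximum of $\sum_u c_u \beta_u$ over all $\beta$ with $0 \le \beta_u \le d_u$ and $\sum_u \beta_u = x$.

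First I would establish the lower bound $f(x)\ge f(S_i)$ by exhibiting a feasible selection. Set $\alpha_{uv}=1$ for every $u\in S_i$ and $v\sim u$, and $\alpha_{uv}=0$ otherwise; equivalently $\beta_u = d_u$ for $u\in S_i$ and $\beta_u=0$ elsewhere. This is feasible since $\sum_u \beta_u = \sum_{u\in S_i} d_u = \vol(S_i) = x$, and its value is $\sum_{u\in S_i} c_u d_u = \sum_{u\in S_i} f(u) = f(S_i)$.

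The main work is the matching upper bound $f(x)\le f(S_i)$, which I would prove by an exchange argument exploiting that $S_i$ is an initial block of the probability-per-degree ordering. By definition of a segment, $c_u\ge c_{u'}$ whenever $u\in S_i$ and $u'\notin S_i$. Take any feasible $\beta$. If $\beta_u<d_u$ for some $u\in S_i$, then because $\sum_u\beta_u = x = \sum_{w\in S_i}d_w$ there must be some $u'\notin S_i$ with $\beta_{u'}>0$; transferring an amount $\delta=\min(d_u-\beta_u,\beta_{u'})$ from $\beta_{u'}$ to $\beta_u$ keeps $\beta$ feasible and changes the objective by $\delta(c_u-c_{u'})\ge 0$. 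Iterating this exchange saturates every coordinate of $S_i$ without decreasing the objective, reaching exactly the allocation from the lower bound; hence no feasible selection exceeds $f(S_i)$. Combining the two bounds gives $f(x)=\sum_{u\in S_i}f(u)=f(S_i)$.

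Finally, a word on the role of the hypothesis $f(v)>0$ on $S_i$, which I expect to be the only subtle point. In the fractional formulation the constraint $\sum_{u\sim v}\alpha_{uv}=x$ forbids padding with non-edges, so the exchange argument above is self-contained, and positivity merely certifies that saturating $S_i$ is genuinely optimal (each $c_v>0$). The assumption becomes essential in the integral definition, where $T$ may contain pairs $(u,v)$ with $u\not\sim v$ contributing $0$: there a vertex of $S_i$ with $f(v)\le 0$ could profitably be swapped for zero-valued non-edges, breaking the equality. Assuming $f(v)>0$ throughout $S_i$ rules this out and makes both formulations agree on $f(x)=f(S_i)$.
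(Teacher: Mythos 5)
Your proof is correct and follows essentially the same route as the paper's: both identify the optimum as the allocation that saturates every pair $(u,v)$ with $u \in S_i$, $v \sim u$, exploiting that the pair-values $f(u)/d_u$ depend only on the source vertex and that a segment consists precisely of the top-ranked sources. The difference is one of rigor rather than approach --- the paper simply asserts that this greedy allocation achieves the maximum, while your knapsack reformulation in the variables $\beta_u$ and the exchange argument supply the missing justification, and your closing remark correctly identifies that the hypothesis $f(v) > 0$ on $S_i$ is what prevents zero-valued pairs (non-edges in the integral definition, edges leaving the support of $f$ in the fractional one) from displacing vertices of $S_i$ in an optimal selection.
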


\begin{proof}
We are considering the maximum over a subset of vertex pairs $T$ of size
$\vol(S_i)$.  Since we are only adding values over vertex pairs which are edges
in $G$, this maximum is achieved when
\begin{align*}
f(x) &= \sum\limits_{u\in S_i}\sum\limits_{v\sim u}f(u)/d_u\\
&=\sum\limits_{u\in S_i} f(u) \sum\limits_{v\sim u}1/d_u\\
&= f(S_i).
\end{align*}
\qed\end{proof}

\begin{proof}[Theorem~\ref{thm:eupperbound}]
Let $Z$ be the lazy random walk $Z = 1/2(I + P)$.  Then,
\begin{align*}
fZ(S) &= 1/2\Big( f(S) + \sum\limits_{u\sim v \in S} f(u,v) \Big)\\
&= 1/2\Big( \sum\limits_{u \vee v \in S} f(u,v) + \sum\limits_{u \wedge v \in S} f(u,v) \Big)\\
&\leq 1/2(f(\vol(S) + |\partial(S)|) + f(\vol(S) - |\partial(S)|))\\
&= 1/2\Big(f(\vol(S)(1 + \Phi(S))) + f(\vol(S)(1 - \Phi(S)))\Big).
\end{align*}

Let $f_t = \rho_{t,f_S}$, and let $x$ satisfy $0 \leq x \leq 2\varsigma \leq
\vol(G)/2$ and represent a volume of some set $S_i$.  Then taking cue from the
above inequality, we can associate $S$ to $S_i$, $\vol(S)$ to $\vol(S_i) = x$
and $\Phi(S)$ to the minimum Cheeger ratio of a set $S_i$ satisfying $\vol(S_i)
= x \leq 2\varsigma$, or $\Phi_{\varsigma}(\rho_{t,f_S})$.  Then using Claim~\ref{claim:f},
\[ f_tZ(x) \leq 1/2(f_t(x(1+\Phi_{\varsigma}(\rho_{t,f_S}))) + f_t(x(1-\Phi_{\varsigma}(\rho_{t,f_S})))).  \]
Now consider the following differential inequality,
\begin{align}
\frac{\partial}{\partial t}f_t(x) &= -\rho_{t,f_S}(I-W)(x)\label{heatdiff}\\
&= -2\rho_{t,f_S}(I-Z)(x)\nonumber\\
&= -2f_t(x) + 2f_tZ(x)\nonumber\\
&\leq -2f_t(x) + f_t(x(1+\Phi_{\varsigma}(\rho_{t,f_S})))\nonumber\\
&~~~~+ f_t(x(1-\Phi_{\varsigma}(\rho_{t,f_S})))\label{diffin}\\
&\leq 0.\label{concave}
\end{align}
Line (\ref{heatdiff}) follows from (\ref{eq:heateq}), and line (\ref{concave})
follows from the concavity of $f$.

Consider $g_t(x)$ to be $g_t(x) = \sqrt{x}e^{-t\Phi_{\varsigma}(\rho_{t,f_S})^2/4}$.  Then,
\begin{align}
&-2g_t(x) + g_t(x(1+\Phi_{\varsigma}(\rho_{t,f_S}))) + g_t(x(1-\Phi_{\varsigma}(\rho_{t,f_S})))\nonumber\\
&= -2g_t(x) + \sqrt{1+\Phi_{\varsigma}(\rho_{t,f_S})}g_t(x) + \sqrt{1-\Phi_{\varsigma}(\rho_{t,f_S})}g_t(x)\nonumber\\
&= (-2 + \sqrt{1+\Phi_{\varsigma}(\rho_{t,f_S})} + \sqrt{1-\Phi_{\varsigma}(\rho_{t,f_S})})g_t(x)\nonumber\\
&\leq \frac{-\Phi_{\varsigma}(\rho_{t,f_S})^2}{4} g_t(x)\label{yineq}\\
&= \frac{\partial}{\partial t}g_t(x),\nonumber
\end{align}
where we use the fact that $-2 + \sqrt{1+y} + \sqrt{1-y} \leq -y^2/4$ for
$y\in(0,1]$ in line (\ref{yineq}).  Now, since $f_t(0) = g_t(0)$ and
$\frac{\partial}{\partial t}f_t(x)|_{t=0} \leq \frac{\partial}{\partial
t}g_t(x)|_{t=0}$, 
\begin{align*}
&-2f_t(x) + f_t(x(1+\Phi_{\varsigma}(\rho_{t,f_S}))) + f_t(x(1-\Phi_{\varsigma}(\rho_{t,f_S})))\\
&< -2g_t(x) + g_t(x(1+\Phi_{\varsigma}(\rho_{t,f_S}))) + g_t(x(1-\Phi_{\varsigma}(\rho_{t,f_S}))),
\end{align*}
and in particular, $\frac{\partial}{\partial t}f_t(x) \leq
\frac{\partial}{\partial t}g_t(x)$.  Since $f_0(x) \leq g_0(x)$, we may conclude
that
\begin{equation*}
f_t(x) \leq g_t(x) = \sqrt{x}e^{-t\Phi_{\varsigma}(\rho_{t,f_S})^2/4}.
\end{equation*}
\qed\end{proof}

Using Lemma~\ref{lem:lowerbound} and Theorem~\ref{thm:eupperbound}, we arrive at
the following useful inequalities.

\begin{corollary}
\label{cor:bounds}
Let $G$ be a graph and $S$ a subset with volume $\varsigma \leq \rm{vol}(G)/4$.
Then,
\begin{equation*}
\frac{1}{2}e^{-t\localcheeg} 
\leq \rho_{t,f_S}(S)
\leq \sqrt{\varsigma}e^{-t\Phi_{\varsigma}(\rho_{t,f_S})^2/4}.
\end{equation*}
\end{corollary}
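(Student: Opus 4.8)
The plan is short, because this statement is a corollary whose two halves have already been proved for the same quantity $\rho_{t,f_S}(S)$ under the same hypothesis $\varsigma \leq \vol(G)/4$. My approach is simply to read off the lower bound from Lemma~\ref{lem:lowerbound} and the upper bound from Theorem~\ref{thm:eupperbound} and concatenate them.

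First I would invoke Lemma~\ref{lem:lowerbound}, which supplies the chain $\frac{1}{2}e^{-t\localcheeg} \leq \frac{1}{2}e^{-t\lambda_S} \leq \rho_{t,f_S}(S)$. The intermediate term $\frac{1}{2}e^{-t\lambda_S}$ appears because the lemma routes through the smallest Dirichlet eigenvalue $\lambda_S$ and then applies the local Cheeger inequality~(\ref{eq:cheegerinequality}) to replace $\lambda_S$ by $\localcheeg$; for the corollary I would discard this intermediate term and keep only the outermost inequality $\frac{1}{2}e^{-t\localcheeg} \leq \rho_{t,f_S}(S)$. Next I would invoke Theorem~\ref{thm:eupperbound}, which directly gives $\rho_{t,f_S}(S) \leq \sqrt{\varsigma}e^{-t\Phi_{\varsigma}(\rho_{t,f_S})^2/4}$. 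Stacking the two inequalities produces exactly the two-sided bound claimed.

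There is no genuine obstacle here, since all of the real work was carried out in establishing Lemma~\ref{lem:lowerbound} and Theorem~\ref{thm:eupperbound}. The only point worth checking is that the hypotheses line up: both cited results are stated under precisely the condition $\varsigma \leq \vol(G)/4$ assumed in the corollary, so no extra verification of the volume constraint is needed, and the same distribution $f_S$ and temperature $t$ appear throughout. Hence the proof reduces to a one-line combination of the two prior bounds.
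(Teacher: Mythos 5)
Your proposal is correct and is exactly what the paper does: the corollary is stated immediately after Lemma~\ref{lem:lowerbound} and Theorem~\ref{thm:eupperbound} with no further argument, precisely because it is the concatenation of those two bounds on the same quantity $\rho_{t,f_S}(S)$. The only tiny inaccuracy is your claim that both cited results carry the hypothesis $\varsigma \leq \mathrm{vol}(G)/4$ --- the lemma is stated for an arbitrary subset $S$ with no volume restriction --- but since the lemma's hypothesis is weaker, this does not affect the validity of the combination.
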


We are now prepared to prove Lemma~\ref{thm:volumegoodset}.

\begin{proof}[Lemma~\ref{thm:volumegoodset}]
Let $F$ be the set of seeds $F = \{u\in S ~:~ \rho_{t,u}(S) \leq 2\rho_{t,f_S}(S)\}$.
Then, by (\ref{eq:ehkpru}),
\begin{equation*}
F = \{u\in S ~:~ \rho_{t,u}(S) \leq 2\E(\rho_{t,u}(S))\}.
\end{equation*}

Now we consider the set of vertices not included in $F$,
\begin{align*}
\E(\rho_{t,u}(S) ~|~ u\notin F) &\geq \sum\limits_{u\notin F} \frac{d_u}{\vol(S)} 2\E(\rho_{t,u}(S))\\
&\geq \frac{\vol(S\setminus F)}{\vol(S)} 2\sum\limits_{u\notin F} \E(\rho_{t,u}(S)).
\end{align*}
Which implies
\begin{equation*}
\frac{\vol(S)}{2} > \vol(S\setminus F) ~~~\mbox{ or, }~~~ \vol(F) > \varsigma/2.
\end{equation*}
\qed\end{proof}

We can use Lemma~\ref{thm:volumegoodset} to reason that many vertices $u$
satisfy the above inequalities, and thus can serve as good seeds for performing
a sweep.

\subsection{A local graph clustering algorithm}
\label{sec:localpartition}
It follows from Lemma~\ref{thm:volumegoodset} that the ranking induced by a
heat kernel pagerank vector with appropriate seed vertex can be used to find a
cut with approximation guarantee $O(\sqrt{\phi})$ by choosing the appropriate
$t$.  To obtain a sublinear time local clustering algorithm for massive graphs,
we use \hkprseedalg~to efficiently compute an $\epsilon$-approximate heat kernel
pagerank vector, $\hat{\rho}_{t,u}$, to rank vertices.

The ranking induced by $\hat{\rho}_{t,u}$ is not very different from that of a true
vector $\rho_{t,u}$ in the support of $\hat{\rho}_{t,u}$ (for an experimental analysis,
see Section~\ref{sec:rankings}).  Namely, using the bounds of
Lemma~\ref{lem:lowerbound}, we have
\[ \hat{\rho}_{t,u}(S) \geq (1-\epsilon)\rho_{t,u}(S)
-\epsilon s,  \] 
where $s = |S|$.  In particular,
\begin{equation}\label{eq:approxineq}
\frac{1}{2}(1-\epsilon)e^{-t\localcheeg} -\epsilon s 
\leq \hat{\rho}_{t,u}(S) 
\leq \sqrt{\varsigma}e^{-t\Phi_{\varsigma}(\hat{\rho}_{t,u})^2/4}
\end{equation}
for a set of vertices $u$ of volume at least $\varsigma/2$.

\begin{theorem}\label{thm:main}
Let $G$ be a graph and $S\subset G$ a subset with $\rm{vol}(S)$$= \varsigma
\leq$$\rm{vol}(G)/4$, $|S|=s$, and Cheeger ratio $\Phi(S) \leq \phi$.  Let
$\hat{\rho}_{t,u}$ be an $\epsilon$-approximate of $\rho_{t,u}$ for some vertex $u\in S$.
Then there is a subset $S_t\subset S$ with $\rm{vol}(S_t)$$\geq\varsigma/2$ for
which a sweep over $\hat{\rho}_{t,u}$ for any vertex $u\in S_t$ with
\begin{enumerate}
\item$t=\tparamcluster$ and
\item$\Phi_{\varsigma}(\hat{\rho}_{t,u})^2 \leq 4/t\log(2)$\label{assumption:sweepcheeg}
\end{enumerate}
finds a set with $\varsigma$-local Cheeger ratio at most $\sqrt{8\phi}$.
\end{theorem}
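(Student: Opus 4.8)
The plan is to reduce the statement to a single scalar inequality about the sweep Cheeger ratio. By definition the sweep over $\hat{\rho}_{t,u}$ returns a segment realizing the $\varsigma$-local Cheeger ratio $\Phi_{\varsigma}(\hat{\rho}_{t,u})$, so it suffices to (i) produce the promised seed set $S_t$ of volume at least $\varsigma/2$ and (ii) show $\Phi_{\varsigma}(\hat{\rho}_{t,u})^2 \leq 8\phi$ whenever $u\in S_t$. For (i) I would take $S_t$ to be the good-seed set furnished by Lemma~\ref{thm:volumegoodset}, transported to the approximate vector: the proof of that lemma shows the set of $u\in S$ obeying the two-sided bound has volume at least $\varsigma/2$, and the $\epsilon$-approximation guarantee upgrades those bounds to the sandwich (\ref{eq:approxineq}). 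This is the only place the $\vol(S_t)\geq\varsigma/2$ conclusion is used, and it carries over essentially verbatim from the exact vector to $\hat{\rho}_{t,u}$.

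For (ii) I would start from (\ref{eq:approxineq}) and use $\localcheeg\leq\Phi(S)\leq\phi$ to replace $e^{-t\localcheeg}$ by the smaller $e^{-t\phi}$, obtaining $\tfrac12(1-\epsilon)e^{-t\phi}-\epsilon s \leq \sqrt{\varsigma}\,e^{-t\Phi_{\varsigma}(\hat{\rho}_{t,u})^2/4}$. Morally, isolating $\Phi_{\varsigma}(\hat{\rho}_{t,u})^2$ and taking logarithms yields a bound of the shape $\Phi_{\varsigma}(\hat{\rho}_{t,u})^2 \leq 4\phi + \tfrac{4}{t}\log\!\big(\tfrac{2\sqrt{\varsigma}}{1-\epsilon}\big)$, and the choice $t=\tparamcluster$ in hypothesis~(1) is engineered exactly so that the $\tfrac{4}{t}\log(\cdot)$ term is at most $4\phi$; the two contributions then combine to the claimed $8\phi$. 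This is the design intuition inherited from the exact-vector statement in Corollary~\ref{cor:bounds}.

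The main obstacle is that the additive error $\epsilon s$ in the lower bound of (\ref{eq:approxineq}) breaks the clean logarithmic manipulation that works for the exact vector, since one cannot simply take $\log$ of a difference. This is precisely what hypothesis~(2) is for: it states $\tfrac{t}{4}\Phi_{\varsigma}(\hat{\rho}_{t,u})^2 \leq \log 2$, equivalently $e^{-t\Phi_{\varsigma}(\hat{\rho}_{t,u})^2/4}\geq \tfrac12$, which caps the sweep ratio without appealing to the fragile lower bound. I would then finish by comparing against the chosen $t$: since $\tfrac{2\sqrt{\varsigma}}{1-\epsilon}+2\epsilon s \geq \sqrt{2}$ (using $\varsigma\geq 1$), hypothesis~(1) gives $t\phi \geq \tfrac12\log 2$, hence $\log 2 \leq 2t\phi$, and therefore $\tfrac{t}{4}\Phi_{\varsigma}(\hat{\rho}_{t,u})^2 \leq \log 2 \leq 2t\phi$, i.e.\ $\Phi_{\varsigma}(\hat{\rho}_{t,u})^2 \leq 8\phi$. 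Thus the sweep returns a set of $\varsigma$-local Cheeger ratio at most $\sqrt{8\phi}$, as required. The remaining routine checks are that the lower bound in (\ref{eq:approxineq}) stays positive for the relevant range of $\epsilon$ (so that $S_t$ is genuinely the good-seed set) and that the winning segment has volume in $[0,2\varsigma]$, so that its ratio is counted by $\Phi_{\varsigma}$.
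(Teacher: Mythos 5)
Your proposal is correct as a proof of the theorem as literally stated, but the decisive step is genuinely different from the paper's. The paper actually runs the heat kernel machinery: starting from the sandwich (\ref{eq:approxineq}) for $u\in S_t$, it factors out $e^{-t\Phi_{\varsigma}(\hat{\rho}_{t,u})^2/4}$, invokes hypothesis~(2) only to replace the error term $\epsilon s\,e^{t\Phi_{\varsigma}(\hat{\rho}_{t,u})^2/4}$ by $2\epsilon s$, and then takes logarithms to obtain $\Phi_{\varsigma}(\hat{\rho}_{t,u})^2 \leq 4\localcheeg + 4x/t$ with $x=\log\big(\frac{2\sqrt{\varsigma}}{1-\epsilon}+2\epsilon s\big)$; the bound $\sqrt{8\phi}$ then comes from two contributions of $4\phi$ each, via $\localcheeg\leq\Phi(S)\leq\phi$ and $t=\phi^{-1}x$. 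Your closing argument instead derives the conclusion from hypotheses~(1) and~(2) alone: hypothesis~(2) reads $\Phi_{\varsigma}(\hat{\rho}_{t,u})^2\leq \frac{4\log 2}{t}$, and since $t\phi=x\geq\log 2$ (because $\frac{2\sqrt{\varsigma}}{1-\epsilon}\geq 2$ when $\varsigma\geq 1$), this is already at most $4\phi\leq 8\phi$. Consequently your part~(ii) never uses (\ref{eq:approxineq}), the bound $\Phi(S)\leq\phi$, or even membership of $u$ in $S_t$ (the "moral" logarithmic manipulation in your second paragraph is never completed and is not needed); only your part~(i), establishing $\vol(S_t)\geq\varsigma/2$, touches the heat kernel bounds, and there you follow the paper. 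What your shortcut buys is brevity, and it exposes something real: as stated, hypothesis~(2) together with the prescribed $t$ subsumes the conclusion, so the theorem is close to tautological when read literally. What the paper's longer route buys is the intended mathematical content --- the mechanism by which heat kernel mixing forces a low-conductance sweep cut --- which is what the clustering algorithm's analysis (Theorem~\ref{thm:localpart}) implicitly relies on, and which would survive a reformulation in which hypothesis~(2) is weakened (e.g., to a dichotomy: either the sweep ratio is already below $\sqrt{8\phi}$, or the sandwich argument applies). If you want your proof to carry that content rather than certify a near-tautology, you would need to complete the logarithmic argument you sketch in your second paragraph, handling the $\epsilon s$ term the way the paper does.
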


\begin{proof}
Let $u$ be a vertex in $S_t$ as described in the theorem statement.  Using the
inequalities (\ref{eq:approxineq}), we can bound the local Cheeger ratio by a
sweep over $\hat{\rho}_{t,u}$:
\begin{equation*}
e^{-t\localcheeg} \leq \frac{2}{1-\epsilon}(\sqrt{\varsigma}e^{-t\Phi_{\varsigma}(\hat{\rho}_{t,u})^2/4}+\epsilon s)
\end{equation*}
which implies
\begin{equation*}
e^{-t\localcheeg} \leq e^{-t\Phi_{\varsigma}(\hat{\rho}_{t,u})^2/4} \Big(\frac{2\sqrt{\varsigma}}{1-\epsilon} + \epsilon s e^{t\Phi_{\varsigma}(\hat{\rho}_{t,u})^2/4}\Big),
\end{equation*}
and by the assumption \ref{assumption:sweepcheeg}, we have
\begin{align*}
e^{-t\localcheeg} &\leq e^{-t\Phi_{\varsigma}(\hat{\rho}_{t,u})^2/4} \Big(\frac{2\sqrt{\varsigma}}{1-\epsilon} + 2\epsilon s\Big)\\
\localcheeg &\geq \frac{\Phi_{\varsigma}(\hat{\rho}_{t,u})^2}{4} - \frac{\log(\frac{2\sqrt{\varsigma}}{1-\epsilon} + 2\epsilon s)}{t}.
\end{align*}

Let $x=\log(\frac{2\sqrt{\varsigma}}{1-\epsilon} + 2\epsilon s)$.
Then, 
\begin{equation*}
\Phi_{\varsigma}(\rho_{t,f_S})^2 \leq 4\localcheeg + 4x/t.
\end{equation*}
Since $\localcheeg \leq \Phi(S) \leq \phi$ and $t = \phi^{-1}x$, it follows that
$\Phi_{\varsigma}(\hat{\rho}_{t,u}) \leq \sqrt{8\phi}$.  In particular, a sweep over
$\hat{\rho}_{t,u}$ finds a cut with Cheeger ratio $O(\sqrt{\phi})$ as long as $u$ is
contained in $S_t$.
\qed\end{proof}

We are now prepared to give our algorithm for finding cuts locally with heat
kernel pagerank.  The algorithm takes as input a starting vertex $u$, a desired
volume $\varsigma$ for the cut set, and a target Cheeger ratio $\phi$ for the
cut set.  Then, to find a set achieving a minimum $\varsigma$-local Cheeger
ratio, we perform a sweep over an approximate heat kernel pagerank vector with
the starting vertex as a seed.

\begin{algorithm}[H]
\floatname{algorithm}{}
\caption*{\partitionalgparams}
\label{alg:localpart}
input: a graph $G$, a vertex $u$, target cluster size $s$, target cluster volume
$\varsigma \leq \vol(G)/4$, target Cheeger ratio $\phi$, error parameter
$\epsilon$.\\
output: a set $T$ with $\varsigma/2 \leq \vol(T) \leq 2\varsigma$,
$\Phi(T) \leq \sqrt{8\phi}$.\\
\begin{algorithmic}[1]
  \State $t \gets \tparamcluster$
  \State $\hat{\rho} \gets \hkprseedalgparams$\label{line:hkpr}
  \State sort the vertices of $G$ in the support of $\hat{\rho}$ according to the ranking
$\hat{\rho}[v]/d_{v}$\label{line:sort}
  \For{$j\in[1,n]$}\label{line:sweep}
    \State $S_j = \bigcup_{i\leq j}v_i$
    \If{$\vol(S_j) > 2\varsigma$}\label{line:volume}
      \State output \texttt{NO CUT FOUND}, break
    \ElsIf{$\varsigma/2 \leq \vol(S_j) \leq 2\varsigma$ and $\Phi(S_j) \leq \sqrt{8\phi}$}\label{line:checks}
      \State output $S_j$
    \Else
      \State output \texttt{NO CUT FOUND}
    \EndIf
  \EndFor
\end{algorithmic}
\end{algorithm}

\begin{theorem}{\label{thm:localpart}}
Let $G$ be a graph which contains a subset $S$ of volume at most $\rm{vol}(G)/4$
and Cheeger ratio bounded by $\phi$.  Further, assume that $u$ is contained in
the set $S_t\subseteq S$ as defined in Theorem~\ref{thm:main}.  Then
\partitionalgparams outputs a cutset $T$ with $\varsigma$-local Cheeger ratio at
most $\sqrt{8\phi}$.  The running time is the same as that of~\hkprseedalg.
\end{theorem}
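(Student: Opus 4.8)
The plan is to reduce correctness entirely to Theorem~\ref{thm:main} and to reduce the running time to that of the subroutine \hkprseedalg, treating the additional sorting and sweeping steps as lower-order work. The theorem is essentially a packaging of the earlier analysis: the hypotheses $\vol(S)\leq \vol(G)/4$, $\Phi(S)\leq\phi$, and $u\in S_t$ are exactly those under which Theorem~\ref{thm:main} guarantees that a sweep over $\hat{\rho}_{t,u}$ produces a cut of small Cheeger ratio. So the bulk of the argument is to verify that \partitionalg~faithfully realizes that sweep and then to account for the cost of the pieces surrounding the heat kernel pagerank call.

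First I would establish correctness. The algorithm's first step sets $t=\tparamcluster$, which is precisely condition~(1) of Theorem~\ref{thm:main}, and line~\ref{line:hkpr} computes $\hat{\rho}$ by calling \hkprseedalg. By Theorem~\ref{thm:hkpraccuracy} the returned vector is a genuine $\epsilon$-approximate vector of $\rho_{t,u}$ with probability at least $1-\epsilon$, which is exactly what licenses the inequalities~(\ref{eq:approxineq}) on which Theorem~\ref{thm:main} rests. Lines~\ref{line:sort}--\ref{line:checks} then sort the support of $\hat{\rho}$ by probability-per-degree and sweep through the induced segments $S_j$. Invoking Theorem~\ref{thm:main} with $u\in S_t$: whenever the sweep Cheeger ratio satisfies condition~(2), namely $\Phi_{\varsigma}(\hat{\rho}_{t,u})^2\leq (4/t)\log 2$, there is a segment of volume at most $2\varsigma$ whose Cheeger ratio is at most $\sqrt{8\phi}$, and the volume lower bound $\varsigma/2$ is supplied by the fact that $S_t\subseteq S$ has volume at least $\varsigma/2$ (Lemma~\ref{thm:volumegoodset}). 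Since the test at line~\ref{line:checks} checks exactly $\varsigma/2\leq\vol(S_j)\leq 2\varsigma$ together with $\Phi(S_j)\leq\sqrt{8\phi}$, the algorithm outputs such a segment. If instead condition~(2) fails, no admissible segment clears the Cheeger test before $\vol(S_j)$ exceeds $2\varsigma$, and the algorithm reports \texttt{NO CUT FOUND}, which is the promised certificate that no such cut is exposed by the sweep.

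For the running time I would argue that the call to \hkprseedalg~at line~\ref{line:hkpr} dominates. The support of $\hat{\rho}$ has size at most $r=\rparam$, since each of the $r$ simulated walks deposits mass on at most one vertex, so the sort at line~\ref{line:sort} costs $O(r\log r)$. The sweep of lines~\ref{line:sweep}--\ref{line:checks} maintains $\vol(S_j)$ and $\Phi(S_j)$ incrementally and is aborted as soon as $\vol(S_j)>2\varsigma$ at line~\ref{line:volume}; this is the locality of the algorithm and bounds the sweep work by the volume of the output set. Comparing these against the $\hkprcomplexity$ cost of \hkprseedalg~from Theorem~\ref{thm:hkpraccuracy}, the post-processing is of the same order, so the total running time matches that of \hkprseedalg.

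The main obstacle I expect lies not in the deterministic combinatorics but in two points of care. The first is threading the probabilistic guarantee through cleanly: the whole conclusion holds only on the event that $\hat{\rho}_{t,u}$ is genuinely $\epsilon$-approximate, which Theorem~\ref{thm:hkpraccuracy} furnishes with probability $1-\epsilon$, and it is exactly on that event that~(\ref{eq:approxineq}) and hence Theorem~\ref{thm:main} apply. The second, more delicate point is confirming that the low-Cheeger-ratio segment guaranteed by the $\varsigma$-local Cheeger ratio bound actually lands in the admissible window $[\varsigma/2,2\varsigma]$ rather than below it, so that the explicit volume test at line~\ref{line:checks} does not discard the good cut; pinning this down is where the volume accounting for $S_t$ from Lemma~\ref{thm:volumegoodset} is indispensable.
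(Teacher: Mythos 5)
Your correctness argument coincides with the paper's: set $t$ as in Theorem~\ref{thm:main}, use Theorem~\ref{thm:hkpraccuracy} to get a genuine $\epsilon$-approximate vector, invoke Theorem~\ref{thm:main} to conclude the sweep exposes a set of $\varsigma$-local Cheeger ratio at most $\sqrt{8\phi}$, and note that the test at line~\ref{line:checks} detects it. The genuine gap is in the running-time argument. You bound the support of $\hat{\rho}$ by $r=\rparam$ (one vertex per walk) and assert that the resulting sort cost $O(r\log r)$ is ``of the same order'' as the $\hkprcomplexity$ cost of \hkprseedalg. It is not: here $\log r = \Theta(\log\log n + \log(\epsilon^{-1}))$, while the per-walk cost in \hkprseedalg\ is only $K = O\big(\kparam\big)$, so $O(r\log r)$ exceeds $O(rK)$ by a factor of order $\log\log(\epsilon^{-1})$ even for fixed $n$, and by an additional unbounded factor proportional to $\log\log n$ as $n$ grows. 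Under your accounting the total cost is $O\big(\epsilon^{-3}\log n\,(\log\log n + \log(\epsilon^{-1}))\big)$, which is strictly larger than $\hkprcomplexity$, so the theorem's claim that the running time equals that of \hkprseedalg\ does not follow. The paper closes exactly this hole with a different support bound: by the guarantees of an $\epsilon$-approximate vector, vertices with probability below $\epsilon$ are excluded from the support, so the sweep is over only $O(\epsilon^{-1})$ vertices --- a bound independent of $n$ --- and sorting costs $O(\epsilon^{-1}\log(\epsilon^{-1}))$, which is genuinely dominated by the heat kernel pagerank computation.

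A secondary point: you justify the good segment landing in the volume window $[\varsigma/2, 2\varsigma]$ by appealing to Lemma~\ref{thm:volumegoodset}, but that lemma lower-bounds the volume of the set $S_t$ of admissible \emph{seeds}, not the volume of any sweep segment, so this step is a non sequitur. The paper itself does not treat this issue beyond asserting that line~\ref{line:checks} discovers the set, so the looseness is shared; but the specific justification you supply for it is not valid and should be removed or replaced by an honest assumption.
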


\begin{proof}
Since it is given that $u\in S_t$ for $t=\tparamcluster$, and by the assumptions
on $G$ and $S$, Theorem~\ref{thm:main} states that a sweep over the approximate
heat kernel pagerank vector $\hat{\rho}$ will find a set with $\varsigma$-local
Cheeger ratio at most $\sqrt{8\phi}$.  The checks performed in
line~\ref{line:checks} of the algorithm discover such a set.

The computational work reduces to the main procedures of computing the heat
kernel pagerank vector in line~\ref{line:hkpr} and performing a sweep over the
vector in line~\ref{line:sweep}.  Performing a sweep involves sorting the
support of the vector (line~\ref{line:sort}) and calculating the conductance of
segments.  From the guarantees of an $\epsilon$-approximate heat kernel pagerank
vector, any vertex with average probability less than $\epsilon$ will be
excluded from the support.  Then the volume of a vector $\hat{\rho}$ output in
line~\ref{line:hkpr} is $O(\epsilon^{-1})$, and performing a sweep over
$\hat{\rho}$ can be done in $O(\epsilon^{-1}\log(\epsilon^{-1}))$ time.  The
algorithm is therefore dominated by the time to compute a heat kernel pagerank
vector, and the total running time is $\hkprcomplexity$.
\qed\end{proof}

\section{Ranking Vertices with Approximate Heat Kernel Pagerank}
\label{sec:rankings}
The backbone procedure of the local clustering algorithm is the sweep: ranking
the vertices of the graph according to their approximate heat kernel pagerank
values, and then testing the quality of the cluster obtained by adding vertices
one at a time in the order of this ranking.  To this end, in this section we
compare the rankings of vertices obtained using exact heat kernel pagerank
vectors with approximate heat kernel pagerank vectors.  Specifically, we
consider how accuracy changes as the upper bound of random walk lengths, $K$,
vary.

In the following experiments, we approximate heat kernel pagerank vectors by
sampling random walks of length $\min\{k, K\}$, where $k$ is chosen with
probability $p_k = e^{-t}\frac{t^k}{k!}$.  We test the values computed with
different values of $K$.  Since the expected value of a random walk length $k$
chosen with probability $p_k = e^{-t}\frac{t^k}{k!}$ is $t$, we set $K$ to range
from $1$ to approximately $t$ for a specified value of $t$.

In each trial, for a given graph, seed vertex, and value of $t$, we compute an
exact heat kernel pagerank vector $\rho_{t,u}$ and an approximate heat kernel
pagerank vector $\hat{\rho}_{t,u}$ using \hkprseedalg~but limiting the length of
random walks to $K$ for various $K$ as described above.  We then measure how
similar the vectors are in two ways.  First, we compare the vector values
computed.  Second, we compare the rankings obtained with each vector.  The
following are the measures used:

\begin{enumerate}
\item \emph{Comparing vector values.}  We measure the error of the approximate
vector $\hat{\rho}_{t,u}$ by examining the values computed for each vertex and
comparing to an exact vector $\rho_{t,u}$.  We use the following measures:
  \begin{itemize}
  \item \textbf{Average $L_1$ error}: The average absolute error over all vertices of the graph,
  \begin{equation}\label{eq:measure-l1err}
  \mbox{average $L_1$ error } := \frac{1}{n}\sum_{i=1}^n
|\rho_{t,u}(v_i)-\hat{\rho}_{t,u}(v_i)|.
  \end{equation}
  \item \textbf{$\epsilon$-error}: The accumulated error in excess of an
$\epsilon$-approximation (see Definition~\ref{def:eps-approx}),
  \begin{align}
  \mbox{$\epsilon$-error } := &\sum_{v\in V, \hat{\rho}_{t,u}(v) > 0} \max\{|\rho_{t,u}(v) - \hat{\rho}_{t,u}(v)|
- \epsilon\rho_{t,u}(v), 0\}\nonumber\\
  &+ \sum_{v\in V, \hat{\rho}_{t,u}(v) = 0} \max\{\rho_{t,u}(v) - \epsilon,
0\}.\label{eq:measure-epserr} \end{align}
  \end{itemize}
\item \emph{Comparing vector rankings.}  To measure the similarity of vertex
rankings we use the intersection difference (see
\cite{benzi2013total,fagin2003comparing} among others).  For a ranked list of
vertices $A$, let $A_i$ be the set of items with the top $i$ rankings.  Then we
use the following measures:
  \begin{itemize}
  \item \textbf{Intersection difference}: Given two ranked lists of vertices,
$A$ and $B$, each of length $n$, the intersection difference is
  \begin{equation}\label{eq:measure-isim}
  \mbox{intersection difference} := dist(A,B) = \frac{1}{n} \sum\limits_{i=1}^n
\frac{|A_i \oplus B_i|}{2i},
  \end{equation}
  where $A_i \oplus B_i$ denotes the symmetric difference $(A_i \setminus B_i) \cup (B_i
\setminus A_i)$.
  \item \textbf{Top-$k$ intersection difference}: The intersection difference 
among the top $k$ elements in each ranking,
  \begin{equation}\label{eq:measure-isimk}
  \mbox{top-$k$ intersection difference} := dist_k(A,B) = \frac{1}{k}
\sum\limits_{i=1}^k \frac{|A_i \oplus B_i|}{2i}.
  \end{equation}
  \end{itemize}
Intersection difference values lie in the range $[0,1]$, where a difference of
$0$ is achieved for identical rankings, and $1$ for totally disjoint lists.  In
these experiments, $A$ is the list of vertices ranked according to an exact heat
kernel pagerank vector $\rho_{t,u}$, and $B$ is the list of vertices ranked
according to an $\epsilon$-approximate heat kernel pagerank vector $\hat{\rho}_{t,u}$.
\end{enumerate}

In every trial we choose $t=\tparamcluster$ as specified in the local clustering
algorithm stated in Section~\ref{sec:localpartition}.  This value depends on
several parameters, including desired Cheeger ratio, cluster size, and cluster
volume.  Specifics are provided for each set of trials.

\subsection{Synthetic graphs}
\label{sec:synthranking}

\subsubsection{Random graph models}
In this series of trials we use three different models of random graph
generation provided in the NetworkX~\cite{networkx} Python package, which we
describe presently.

The first is the Watts-Strogatz small world model~\cite{wattsstrogatz},
generated with the command \texttt{connected_watts_strogatz} in NetworkX.  In
this model, a ring of $n$ vertices is created and then each vertex is connected
to its $d$ nearest neightbors.  Then, with probability $p$, each edge $(u,v)$ in
the original construction is replaced by the edge $(u,w)$ for a random existing
vertex $w$.  The model takes parameters $n, d, p$ as input.

The second is the preferential attachment (Barab\'{a}si-Albert)
model~\cite{barabasialbert}.  Graphs in this model are created by adding $n$
vertices one at a time, where each new vertex is adjoined to $d$ edges where
each edge is chosen with probability proportional to the degree of the
neighboring vertex.  This is generated with the \texttt{barabasi_albert_graph}
generator in NetworkX.  The model takes parameters $n, d$ as input.

The third NetworkX generator is \texttt{powerlaw_cluster_graph}, which uses the
Holme and Kim algorithm for generating graphs with powerlaw degree distribution
and approximate average clustering~\cite{holmekim}.  It is essentially the
Barab\'{a}si-Albert model, but each random edge forms a triangle with another
neighbor with probability $p$.  The model takes parameters $n, d, p$ as input.

Table~\ref{table:randomgraphs} lists the random graph models used and their
parameters.

\begin{table}
\begin{tabular}{|l|l|l|}
\hline
\multicolumn{1}{|c|}{Model} & \multicolumn{1}{|c|}{Source} & \multicolumn{1}{|c|}{Parameters}\\
\hline\hline
small world & Watts-Stragatz\cite{wattsstrogatz} & $n$, the size of the vertex set,\\
            &                & $d$, the number of neighbors each vertex is
assigned,\\
            &                & $p$, the probability of switching an edge.\\
\hline
preferential& Barab\'{a}si-Albert\cite{barabasialbert} & $n$, the size of the
vertex set,\\
attachment  &                                          & $d$, the number of
neighbors each vertex is assigned\\
\hline
powerlaw    & Holme and Kim~\cite{holmekim} & $n$, the size of the vertex set,\\
cluster     &                               & $d$, the number of neighbors each
vertex is assigned,\\
            &                               & $p$, the probability of forming a
triangle\\
\hline
\end{tabular}
\caption{Random graph models used.}
\label{table:randomgraphs}
\end{table}

\subsubsection{Procedure} For every value of $K$ that we test, we generate ten
random graphs using each of the three random graph models.  For each graph we
choose a random seed vertex $u$ with probability proportional to degree, and we
choose $t$ as $t=\tparamcluster$ according to the values in
Table~\ref{table:synthrankingparams}.  Then for each graph we compare an exact
heat kernel pagerank vector $\rho_{t,u}$ and the average of two
$\epsilon$-approximate heat kernel pagerank vectors $\hat{\rho}_{t,u}$.  The results
we present are the average over all trials for each $K$ and each type of graph.
We use $d=5$ and $p=0.1$ in every trial, and $n=100$ for the first set of trials
(Figure~\ref{fig:random_rank_100}) and $n=500$ for the second
(Figure~\ref{fig:random_rank_500}).  These parameters are outlined in
Table~\ref{table:synthrankingparams}.

\begin{table}
\centering
\begin{tabular}{|p{2cm}|c|c|c|c|c|c|c|c|}
\hline
\multicolumn{1}{|c|}{Model} & $|V|$ & ~~$d$~~ & ~~$p$~~ & ~~$\epsilon$~~ 
Target & Target & Target & $t$\\
 & & & & & Cheeger ratio & cluster size & cluster volume &\\
\hline\hline
small world  & $100$ & $5$ & $0.1$ & $0.1$ & $\phi = 0.05$ & $s = 100$ & $\varsigma = 500$ & $84.9$\\
             & $500$ & $5$ & $0.1$ & $0.1$ & $\phi = 0.05$ & $s = 100$ &
$\varsigma = 500$ & $84.9$\\\hline
preferential & $100$ & $5$ & - & $0.1$ & $\phi = 0.05$ & $s = 100$ & $\varsigma = 500$ & $84.9$\\
attachment   & $500$ & $5$ & - & $0.1$ & $\phi = 0.05$ & $s = 100$ &
$\varsigma = 500$ & $84.9$\\\hline
powerlaw     & $100$ & $5$ & $0.1$ & $0.1$ & $\phi = 0.05$ & $s = 100$ & $\varsigma = 500$ & $84.9$\\
cluster      & $500$ & $5$ & $0.1$ & $0.1$ & $\phi = 0.05$ & $s = 100$ &
$\varsigma = 500$ & $84.9$\\
\hline
\end{tabular}
\caption{Parameters used for random graph generation and to compute $t$ for vector computations.}
\label{table:synthrankingparams} 
\end{table}

\subsubsection{Discussion} For each graph and value of $K$, we measure the
$\epsilon$-error, the average $L_1$ error, the intersection difference and the
top-$10$ intersection difference of an approximate heat kernel pagerank vector
as compared to an exact heat kernel pagerank vector.
Figure~\ref{fig:random_rank_100} plots the above measures for graphs over
$n=100$ vertices, while Figure~\ref{fig:random_rank_500} plots these measures
for graphs over $n=500$ vertices.  In both Figures~\ref{fig:random_rank_100}
and~\ref{fig:random_rank_500}, each subplot charts a different notion of error
(from top left, clockwise: $\epsilon$-error, average $L_1$ error, intersection
difference and top-$10$ intersection difference) on the y-axis against $K$ on
the x-axis.  

\begin{figure}
\centering
\textbf{Trials on $100$-vertex random graphs.}
\includegraphics[width=\textwidth]{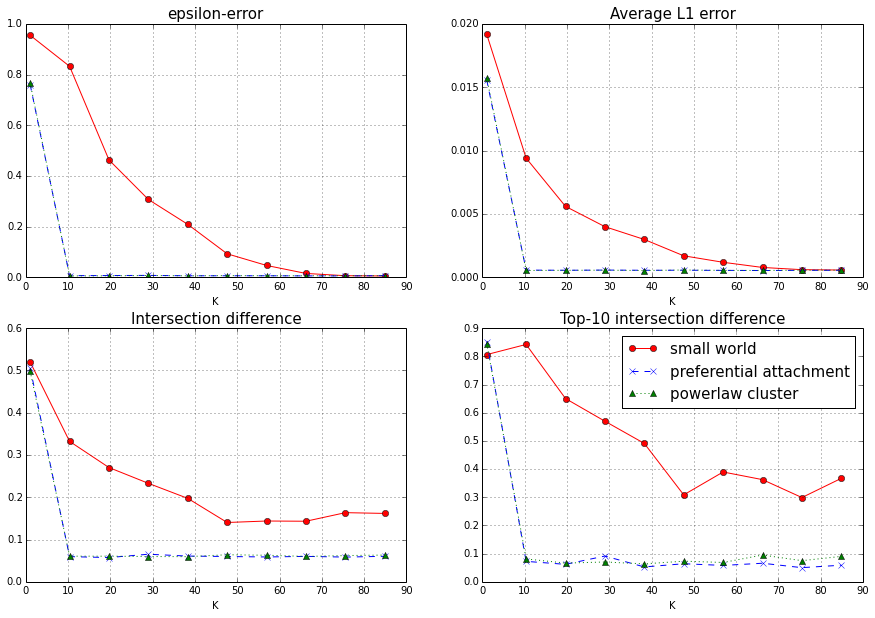}
\caption{Different measures of error for random graphs on $100$ vertices when approximating heat
kernel pagerank with varying random walk lengths.}
\label{fig:random_rank_100}
\end{figure}

\begin{figure}
\centering
\textbf{Trials on $500$-vertex random graphs.}
\includegraphics[width=\textwidth]{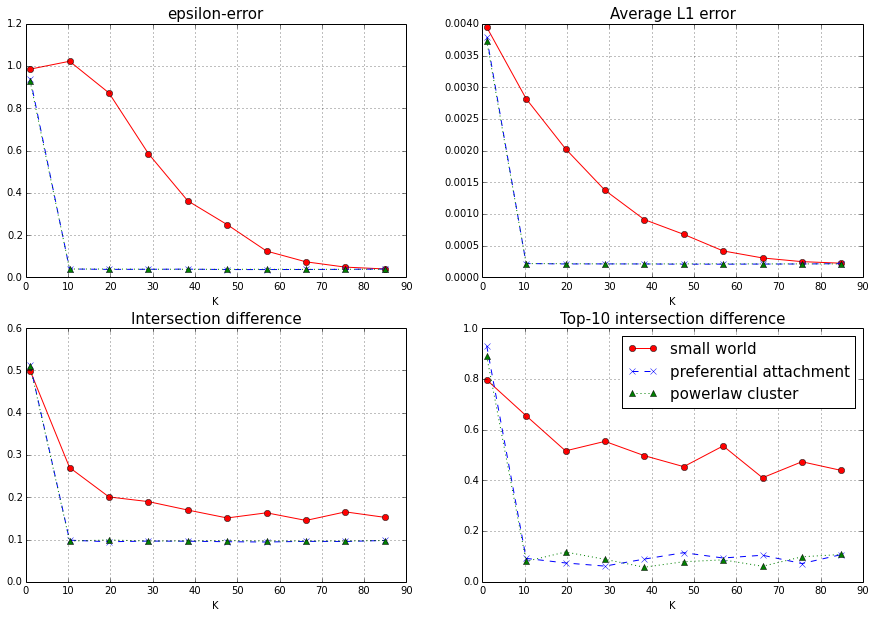}
\caption{Different measures of error for random graphs on $500$ when approximating heat
kernel pagerank with varying random walk lengths.}
\label{fig:random_rank_500}
\end{figure}

In both sets of plots and for every measure of error, we see that in the
preferential attachment and powerlaw graphs the error is minimized after
limiting random walks to only length $K=10$, regardless of the size.  We observe
a shallower decline in $\epsilon$-error, average $L_1$ error, and intsersection
differance for the small world graphs.  In particular, we note that the
intersection difference drops significantly after $10$ random walk steps for all
random graphs on both $100$ and $500$ vertices.  For $\epsilon = 0.1$, $K =
4\cdot \kparam \approx 11.04$ is enough to approximate the rankings for the
purpose of local clustering.

\subsection{Real graphs}
\label{sec:realranking}

\subsubsection{Network data}
For the experiments in this section, and later in Section~\ref{sec:expresults},
we use the following graphs compiled from real data.  The network data is
summarized in Table~\ref{table:realgraphs}.

\begin{enumerate}
\item \textbf{(dolphins)} A dolphin social network consisting of two
families~\cite{dolphins}.  The seed vertex is chosen to be a prominent member of
one of the families.\label{pt:dolphins}
\item \textbf{(polbooks)} A network of books about US politics published around
the time of the 2004 Presidential election and sold on Amazon~\cite{polbooks}.
Edges represent frequent copurchases.\label{pt:polbooks}
\item \textbf{(power)} The topology of the US Western States Power Grid
~\cite{powergrid}.\label{pt:powergrid}
\item \textbf{(facebook)} A combined collection of Facebook ego-networks,
including the ego vertices themselves~\cite{facebook}.\label{pt:facebook}
\item \textbf{(enron)} An Enron email communication network~\cite{enron},
in which vertices represent email addresses and an edge $(i,j)$ exists if an
address $i$ sent at least one email to address $j$.\label{pt:enron}
\end{enumerate}

\begin{table}
\centering
\begin{tabular}{|p{2cm}|l|c|c|c|}
\hline
\multicolumn{1}{|c|}{Network} & \multicolumn{1}{|c|}{Source} & ~~$|V|$~~ & ~~$|E|$~~ & Average degree\\
\hline\hline
dolphins & Dolphins social network~\cite{dolphins} & $62$ & $159$ & $5$\\\hline
polbooks & Copurchases of political books~\cite{polbooks} & $105$ & $441$ &
$8.8$\\\hline
power & Power grid topology~\cite{powergrid} & $4941$ & $6594$ & $2.7$\\\hline
facebook & Facebook ego-networks~\cite{facebook} & $4039$ & $88234$ &
$43.7$\\\hline
enron & Enron communication network~\cite{enron} & $36692$ & $183831$ &
$10$\\
\hline
\end{tabular}
\caption{Graphs compiled from real data.}
\label{table:realgraphs}
\end{table}

The network data for graphs~\ref{pt:dolphins},~\ref{pt:polbooks}, and
\ref{pt:powergrid} were taken from Mark Newman's network data
collection~\cite{newmandata}.  The network data for graphs~\ref{pt:facebook} and
\ref{pt:enron} are from the SNAP Large Network Dataset
Collection~\cite{snapdata}.

\subsubsection{Procedure} In this series of experiments, the seed vertex $u$ was
chosen to be a known member of a cluster.  As before, $t$ was chosen according
to $t=\tparamcluster$ with the values in Table~\ref{table:realrankingparams}.
For each graph and for each value of $K$ we compare an exact heat kernel
pagerank vector $\rho_{t,u}$ with an $\epsilon$-approximate heat kernel pagerank
vector $\hat{\rho}_{t,u}$.  Specifically, we consider the average $L_1$ distance
(\ref{eq:measure-l1err}) and the intersection difference
(\ref{eq:measure-isim}).  We again choose $K$ to range from $1$ to $t$.

\begin{table}
\centering
\begin{tabular}{|c|c|c|c|c|}
\hline
~~$\epsilon$~~ & Target        & Target       & Target         & $t$\\
               & Cheeger ratio & cluster size & cluster volume &\\
\hline
$0.1$ & $\phi = 0.05$ & $s = 100$ & $\varsigma = 1000$ & $95.6$\\
\hline
\end{tabular}
\caption{Parameters used to compute $t$ for vector computations.}
\label{table:realrankingparams} 
\end{table}

\subsubsection{Discussion}
Figure~\ref{fig:real_l1} plots the average $L_1$ error on the y-axis against
different values of $K$ on the x-axis for each of the dolphins, polbooks, and
power graphs.  Figure~\ref{fig:real_rank_isim} plots the intersection difference
on the y-axis against $K$ on the x-axis.

\begin{figure}
\centering
\includegraphics[width=\textwidth]{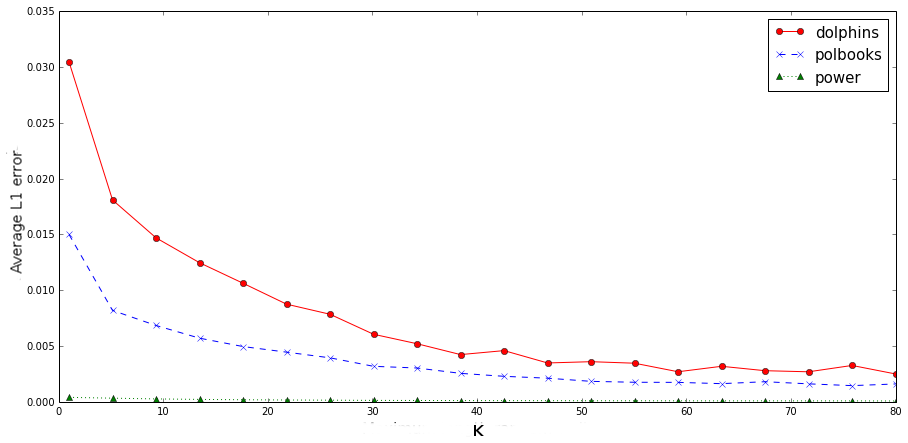}
\caption{Average error in each component for $\epsilon$-approximate heat kernel
pagerank vectors when allowing varying random walk lengths.}
\label{fig:real_l1}
\end{figure}

\begin{figure}
\centering
\includegraphics[width=\textwidth]{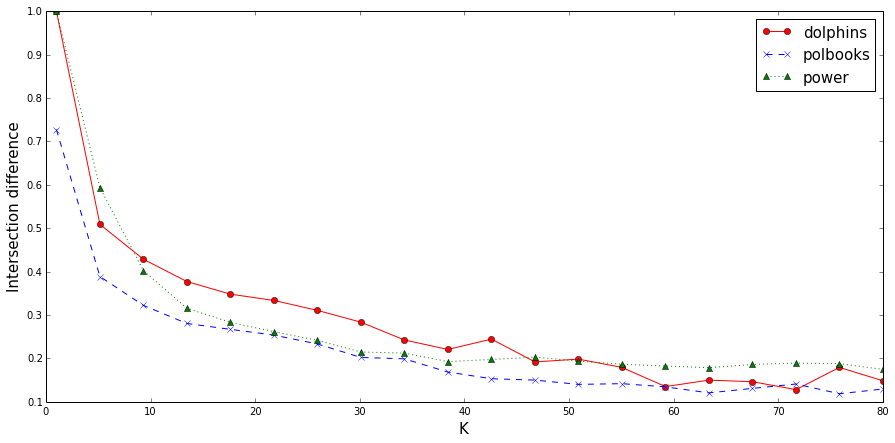}
\caption{Intersersection difference of the ranked lists of vertices computed by
exact and $\epsilon$-approximate heat kernel pagerank vectors when allowing
varying random walk lengths.}
\label{fig:real_rank_isim}
\end{figure}

First we discuss the average $L_1$ error.  The dolphins and the polbooks graphs
exihibit properties of both the small world graphs and the preferential
attachment graphs of the previous section (Figures~\ref{fig:random_rank_100}
and~\ref{fig:random_rank_500}).  Like the preferential attachment models, there
is a significant drop in average $L_1$ error after $K=5$, and like the small
world model the error continues to drop for larger values of $K$, approaching a
minimum error of $\approx 0.003$.  The average $L_1$ error in the power graph,
on the other hand,  is small for all values of $K$.  We remark that,
representing a power grid, the graph has very small average vertex degree, so
few random walk steps are enough to approximate the stationary distribution.

As for the intersection difference, we observe a smaller variance in values for
the three graphs.  Regardless of the size or structure of the graph, the
intersection difference drops sharply from $K=1$ to $K=5$.  For values larger
than $K = 10 < 4\cdot \kparam$, where $\epsilon = 0.1$, the intersection
difference decreases only marginally.

\paragraph{}The purpose of these experiments was to evaluate how error and differences of
ranking change in heat kernel pagerank approximation when varying $K$, the upper
bound on number of steps taken in random walks.  We found that setting an upper
bound for random walk lengths to $K = 10 < 4\cdot \kparam$ with $\epsilon = 0.1$
according to Theorem~\ref{thm:hkpraccuracy} yields approximations which satisfy
the prescribed error bounds.  This value is independent of the size of the graph
and $t$, and depends only on $\epsilon$.  Namely, we observed that choosing $K$
this way results in a significant decrease in both average $L_1$ error and
intersection difference as compared to smaller values of $K$, and only slight
decrease in average $L_1$ error and intersection difference for larger values of
$K$ as demonstrated in
Figures~\ref{fig:random_rank_100},~\ref{fig:random_rank_500},~\ref{fig:real_l1},
and~\ref{fig:real_rank_isim}.  Further, we tested graphs of various size, random
graphs generated from various models (see Section~\ref{sec:synthranking}), and
graphs from real data representing social networks, copurchasing networks, and
topological grids (see Section~\ref{sec:realranking}).  We found this choice of
$K$ was optimal for every graph regardless of size or structure.  That is, the
cutoff for random walk lengths does not depend on the size of the graph.

It is also worth mentioning that the most striking outlier among the subject
graphs is the small world graph, or expander graphs.  This is due to the fact
that the graph consists of a single cluster, which makes local cluster detection
ineffective.

\section{An Assessment of Cheeger Ratios Obtained with Local Clustering Algorithms}
\label{sec:expresults}
The goal of this section is to analyze the quality of local clusters computed
with a sweep over an approximate heat kernel pagerank vector (see
Section~\ref{sec:goodcuts} for details on sweeps).  We consider two objectives
for analysis.

The first objective is to validate the statement of Theorem~\ref{thm:localpart}.
To do this, we show that the Cheeger ratios of local clusters computed with
sweeps over approximate heat kernel pagerank vectors are within the
approximation guarantees of Theorem~\ref{thm:localpart}.  We use a slightly
modified version of \partitionalg~to compute these clusters.  We call this
modified algorithm $\epsilon$HKPR, and it is described in the list below.  

The second objective is to compare clusters computed with sweeps over different
vectors.  Namely, for a given graph and seed vertex, we compare the local
clusters computed using the following sweep algorithms:

\begin{enumerate}
\item \textbf{($\epsilon$HKPR)} A sweep over an $\epsilon$-approximate heat
kernel pagerank vector is performed.  The segment $S$ with volume $\vol(S) \leq
\vol(G)/2$ of minimal Cheeger ratio is output.  This is the
\partitionalg~algorithm with the following modification: we allow segments of
volume up to $\vol(G)/2$ rather than limiting the search to segments of volume
$< 2\varsigma$, twice the target volume.\label{pt:epshkpr}
\item \textbf{(HKPR)} A sweep over an exact heat kernel pagerank vector is
performed.  The segment $S$ with volume $\vol(S) \leq \vol(G)/2$ of minimal Cheeger
ratio is output.  This algorithm was outlined, but not stated explicitely,
in~\cite{chung:partitionhkpr:im09}.
\item \textbf{(PR)} A sweep over a Personalized PageRank vector
(\ref{eq:pagerank}) is performed.  The segment $S$ with volume $\vol(S) \leq
\vol(G)/2$ of minimal Cheeger ratio is output.  This is an adaptation of the
algorithm \texttt{PageRank-Nibble}\cite{acl:prgraphpartition:focs06} with the
following modifications: (i) rather than performing a sweep over an approximate
PageRank vector, perform a sweep over an exact PageRank vector, and (ii) allow
segments only as large as $\vol(G)/2$.
\end{enumerate}

We summarize the algorithms and parameters below in
Table~\ref{table:clusteralgs}.

\begin{table}
\centering
\begin{tabular}{|l|c|l|l|}
\hline
\multicolumn{1}{|c|}{Algorithm} & Sweep vector &
\multicolumn{1}{|c|}{Algorithm parameters} & \multicolumn{1}{|c|}{Sweep vector
parameters}\\
\hline\hline
$\epsilon$HKPR & $\hat{\rho}_{t,u}$ & $\phi$, target Cheeger ratio & $t = \tparamcluster$\\
& & $s$, target cluster size & $u$, seed vertex\\
& & $\varsigma$, target cluster volume & $\epsilon$, approximation parameter\\\hline
HKPR & $\rho_{t,u}$ & $\phi$, target Cheeger ratio & $t = 2\phi^{-1}\log s$\\
& & $s$, target cluster size & $u$, seed vertex\\\hline
PR & $\pr_{\alpha,u}$ & $\phi$, target Cheeger ratio & $\alpha = \phi^2/255\ln(100\sqrt{m})$\\
& & & $u$, seed vertex\\\hline
\end{tabular}
\caption{Algorithms used for comparing local clusters.}
\label{table:clusteralgs}
\end{table}

Each trial will resemble Procedure~\ref{alg:compareclusters}, as stated below.

\begin{algorithm}[H]
\floatname{algorithm}{Procedure}
\caption{Compare Clusters}
\label{alg:compareclusters}
\begin{algorithmic}
\State Let $G$ be a graph and $u$ a seed vertex
\State Choose parameters $\phi$, $s$, $\varsigma$, $\epsilon$
\State Let $S_A$ be a local cluster computed using the algorithm $\epsilon$HKPR
\State Let $S_B$ be a local cluster computed using the algorithm HKPR
\State Let $S_C$ be a local cluster computed using the algorithm PR
\State Compare $S_A, S_B, S_C$.
\end{algorithmic}
\end{algorithm}

The following sections describe the experiments in more detail.

\subsection{Synthetic graphs}
In this section, we use graphs generated with three random graph models:
Watts-Strogatz small world, Barab\'{a}si-Albert preferential attachment, and
Holme and Kim's powerlaw cluster as described in Section~\ref{sec:synthranking}.

\subsubsection{Procedure} We perform twenty-five trials of
Procedure~\ref{alg:compareclusters} and take the averages of Cheeger ratios and
cluster volumes computed.  Specifically, we fix a model and algorithm
parameters.  Then, generate a random graph according to the model and
parameters.  For each random graph, pick a random seed vertex with probability
proportional to degree.  Then, for each seed vertex compute local clusters $S_A,
S_B, S_C$ using the algorithms $\epsilon$HKPR, HKPR, and PR, respectively.  We
then use the average Cheeger ratio and cluster volume of the $S_A, S_B, S_C$ for
comparison.  In Table~\ref{table:synthclusterparams} we summarize the parameters
used for each random graph model.

\begin{table}
\centering
\begin{tabular}{|p{2.6cm}|c|c|c|c|c|c|c|}
\hline
\multicolumn{1}{|c|}{Model}& ~~$|V|~~$ & ~$d$~ & ~~$p$~~ & ~~$\epsilon$~~ & Target        & Target       & Target\\
       &           &       &         &                & Cheeger ratio & cluster size & cluster volume\\
\hline\hline
small world  & $100$   & $5$ & $0.1$ & $0.1$ & $0.1$ & $20$  & $100$\\
             & $500$   & $5$ & $0.1$ & $0.1$ & $0.1$ & $100$ & $500$\\
             & $800$   & $5$ & $0.1$ & $0.1$ & $0.1$ & $100$ & $500$\\
             & $1000$  & $5$ & $0.1$ & $0.1$ & $0.1$ & $100$ & $500$\\\hline
preferential & $100$   & $5$ &   -   & $0.1$ & $0.1$ & $20$  & $100$\\
attachment   & $500$   & $5$ &   -   & $0.1$ & $0.1$ & $100$ & $500$\\
             & $800$   & $5$ &   -   & $0.1$ & $0.1$ & $100$ & $500$\\\hline
powerlaw     & $100$   & $5$ & $0.1$ & $0.1$ & $0.1$ & $20$  & $100$\\
cluster      & $500$   & $5$ & $0.1$ & $0.1$ & $0.1$ & $100$ & $500$\\
             & $800$   & $5$ & $0.1$ & $0.1$ & $0.1$ & $100$ & $500$\\
\hline
\end{tabular}
\caption{Algorithm parameters used to compare local clusters.}
\label{table:synthclusterparams} 
\end{table}

\subsubsection{Discussion} We address the first analytic objective listed in the
introduction of this section by discussing the clusters output by
$\epsilon$HKPR.  Namely, we compare the clusters computed with $\epsilon$HKPR to
the guarantees of Theorem~\ref{thm:localpart}.  The results for each graph are
given in Table~\ref{table:synthhkprclusterresults}.  The first three columns
indicate the random graph model and algorithm parameters used for each instance.
The last two columns demonstrate how the (average) Cheeger ratio of clusters
computed by $\epsilon$HKPR compare to the approximation guarantee of
Theorem~\ref{thm:localpart}.  Namely, Theorem~\ref{thm:localpart} states that
the cluster output will have Cheeger ratio $\leq \sqrt{8\phi}$ with high
probability.  In every case the Cheeger ratio is well within the approximation
bounds.

\begin{table}
\centering
\textbf{Synthetic graphs}\\
\begin{tabular}{|p{2.6cm}|c|c|c|c|}
\hline
\multicolumn{1}{|c|}{Model} & ~~$|V|$~~  & $\phi$, Target & Cheeger ratio output by &
~~~~~$\sqrt{8\phi}$~~~~~\\
             &            & Cheeger ratio & $\epsilon$HKPR          & \\
\hline\hline
small world  & $100$   & $0.1$ & $0.173557$ & $0.894427$\\
             & $500$   & $0.1$ & $0.47316$ & $0.894427$\\
             & $800$   & $0.1$ & $0.510597$ & $0.894427$\\
             & $1000$  & $0.1$ & $0.519399$ & $0.894427$\\\hline
preferential & $100$   & $0.1$ & $0.523929$ & $0.894427$\\
attachment   & $500$   & $0.1$ & $0.503542$ & $0.894427$\\
             & $800$   & $0.1$ & $0.491046$ & $0.894427$\\\hline
powerlaw     & $100$   & $0.1$ & $0.517521$ & $0.894427$\\
cluster      & $500$   & $0.1$ & $0.500312$ & $0.894427$\\
             & $800$   & $0.1$ & $0.494145$ & $0.894427$\\
\hline 
\end{tabular}
\caption{Cheeger ratios of cluster output by $\epsilon$HKPR.}
\label{table:synthhkprclusterresults}
\end{table}

The second objective is to compare clusters computed with the three different
local clustering algorithms $\epsilon$HKPR, HKPR, and PR.
Table~\ref{table:cheegersynth} is a collection of cluster statistics for the
trials.  For each graph instance we list the average Cheeger ratio and cluster
volume of local clusters computed using the PR, HKPR, and $\epsilon$HKPR
algorithms, respectively.  

\begin{table}
\centering
\textbf{Synthetic graphs}\\
\begin{tabular}{|p{2cm}|c|c|c|c|}
\hline
\multicolumn{1}{|c|}{Model} & $|V|$ & PR & HKPR & $\epsilon$HKPR\\
\hline\hline
small world  & 100  & $0.235159$ & $0.087723$ & $0.173557$\\
             & & (volume = $52.52$) & (volume = $171.28$) & (volume = $142$)\\\cline{2-5}
             & 500  & $0.244261$ & $0.062263$  & $0.47316$\\
             & & (volume = $190.16$) & (volume = $943.68$) & (volume = $206.64$)\\\cline{2-5}
             & 800  & $0.246564$ & $0.064599$  & $0.510597$\\
             & & (volume = $162.68$) & (volume = $1413.6$) & (volume = $209.6$)\\\cline{2-5}
             & 1000 & $0.245612$ & $0.064716$ & $0.519399$\\
             & & (volume = $584.56$) & (volume = $1907.4$) & (volume = $225.2$)\\\hline
preferential & 100  & $0.430071$ & $0.512819$ & $0.523929$\\
attachment   & & (volume = $471.2$) & (volume = $467.16$) & (volume = $468.16$)\\\cline{2-5}
             & 500  & $0.508305$ & $0.51018$ & $0.503542$\\
             & & (volume = $2461.96$) & (volume = $2459.4$) & (volume = $2463.28$)\\\cline{2-5}
             & 800  & $0.491046$ & $0.496369$ & $0.491046$\\
             & & (volume = $3964.17$) & (volume = $3971.17$) & (volume =
$3951.83$)\\\hline
powerlaw     & 100  & $0.426828$ & $0.505277$ & $0.517521$\\
cluster      & & (volume = $463.4$) & (volume = $465.44$) & (volume = $464.88$)\\\cline{2-5}
             & 500  & $0.487341$ & $0.507328$ & $0.500312$\\
             & & (volume = $2447.12$) & (volume = $2460.44$) & (volume =
$2446.28$)\\\cline{2-5}
             & 800  & $0.522281$ & $0.513365$ & $0.494145$\\
             & & (volume = $3947$) & (volume = $3966$) & (volume = $3947$)\\
\hline 
\end{tabular}
\caption{Cheeger ratios of clusters output by different local clustering algorithms on
synthetic data.}
\label{table:cheegersynth}
\end{table}

We remark that for each graph there is little variation in Cheeger ratio and
volume of clusters computed by the three different algorithms.  We also note
that there is no obvious trend as graphs get larger.  The small world graphs
demonstrate the greatest variation in cluster quality.  However, as mentioned in
Section~\ref{sec:rankings}, expander graphs, such as small world graphs, consist
of one large cluster.

It is worth noting that in some trials the output volume is significantly
greater than twice the target volume.  While this may seem like a contradiction,
it is a consequence of our implementation.  During a sweep one may choose to
output a cluster of minimal Cheeger ratio, or one that satisfies volume
constraints, or both.  We are interested in comparing Cheeger ratios and so
allow the sweep to continue checking clusters that are well beyond twice the
target volume.

\subsection{Real graphs}
For these trials we use graphs generated from real data summarized in
Section~\ref{sec:realranking}.

\subsubsection{Procedure} We compare clusters computed by each of the three
algorithms as outlined in Procedure~\ref{alg:compareclusters}.  In these trials
we fix the seed vertex to be a member of a cluster with good Cheeger ratio.
Using this seed vertex, we compare the clusters computed using the
$\epsilon$HKPR, HKPR, PR algorithms.

For each trial we use the parameters listed in Table~\ref{table:clusterparams}.
We note that in each case the target cluster volume is computed to be roughly
the target cluster size times the average vertex degree, and here we use
$\epsilon=0.1$.  
\begin{table}
\centering
\begin{tabular}{|p{2cm}|c|c|c|c|c|c|c|}
\hline
\multicolumn{1}{|c|}{Network} & $|V|$ & $|E|$ & Average & ~~$\epsilon$~~ & Target & Target & Target\\
        &       &       & degree  &                & Cheeger ratio & cluster size & cluster volume\\
\hline
dolphins  & $62$    & $159$    & $5$    & $0.1$ & $0.08$ & $20$  & $100$  \\
polbooks  & $105$   & $441$    & $8.8$  & $0.1$ & $0.05$ & $30$  & $270$  \\
power     & $4941$  & $6594$   & $2.7$  & $0.1$ & $0.05$ & $200$ & $600$  \\
facebook  & $4039$  & $88234$  & $43.7$ & $0.1$ & $0.05$ & $200$ & $2800$ \\
enron     & $36692$ & $183831$ & $10$   & $0.1$ & $0.05$ & $100$ & $1000$ \\
\hline
\end{tabular}
\caption{Graph and algorithm parameters used to compare local clusters.}
\label{table:clusterparams} 
\end{table}

\subsubsection{Discussion}
Table~\ref{table:realhkprclusterresults} lists ratios output by $\epsilon$HKPR compared with the approximation guarantees of
Theorem~\ref{thm:localpart}.  In each case, the Cheeger ratios are well within
the approximation bounds of Theorem~\ref{thm:localpart}.

\begin{table}
\centering
\textbf{Real graphs}\\
\begin{tabular}{|p{2cm}|c|c|c|}
\hline
\multicolumn{1}{|c|}{Network} & $\phi$, Target & Cheeger ratio output by &
~~~~~$\sqrt{8\phi}$~~~~~\\
        & Cheeger ratio & $\epsilon$HKPR & \\
\hline\hline
dolphins & $0.08$ & $0.083333$ & $0.8$ \\
polbooks & $0.05$ & $0.052133$ & $0.632456$\\
power    & $0.05$ & $0.346667$ & $0.632456$ \\
facebook & $0.05$ & $0.056939$ & $0.632456$ \\
enron    & $0.05$ & $0.036602$ & $0.632456$\\
\hline 
\end{tabular}
\caption{Cheeger ratios of cluster output \partitionalg.}
\label{table:realhkprclusterresults}
\end{table}

The complete numerical data obtained from the set of the trials are given in
Table~\ref{table:cheegerreal}.  For each graph we list the Cheeger ratio,
cluster volume, and additionally the cluster size of local clusters computed
using each of the algorithms PR, HKPR, and $\epsilon$HKPR, respectively.

\begin{table}
\centering
\textbf{Real graphs}\\
\begin{tabular}{|p{2cm}|c|c|c|}
\hline
\multicolumn{1}{|c|}{Network} & PR & HKPR & $\epsilon$HKPR\\
\hline\hline
dolphins & $0.226415$ & $0.163636$ & $0.083333$ \\
         & (volume = $106$) & (volume = $110$) & (volume = $96$)\\
         & (size = $23$) & (size = $24$) & (size = $20$)\\\hline
polbooks & $0.079518$ & $0.245657$ & $0.052133$\\
         & (volume = $415$) & (volume = $403$) & (volume = $422$)\\
         & (size = $48$) & (size = $49$) & (size = $50$)\\\hline
power    & $0.375$    & $0.002764$ & $0.346668$ \\
         & (volume = $16$) & (volume = $4342$) & (volume = $300$)\\
         & (size = $6$) & (size = $1564$) & (size = $85$)\\\hline
facebook & $0.418993$ & $0.001277$ & $0.056939$ \\
         & (volume = $88140$) & (volume = $67326$) & (volume = $35266$)\\
         & (size = $3063$) & (size = $1094$) & (size = $258$)\\\hline
enron    & $0.48797$ & - & $0.036602$\\
         & (volume = $183612$) & - &  (volume = $3579$)\\
\hline 
\end{tabular}
\caption{Cheeger ratios of cluster output by different local clustering
algorithms.}
\label{table:cheegerreal}
\end{table}

For each graph, the local cluster computed using $\epsilon$HKPR has smaller
Cheeger ratio than the local cluster computed using PR.  For the power graph, we
observe that the cluster of minimal Cheeger ratio was computed using the HKPR
algorithm, but it is nearly a third the size of the entire network.  The
algorithms $\epsilon$HKPR and PR, on the other hand, each return smaller
clusters.  We remark that for real graphs, the clusters computed using sweeps
over different vectors have more variation than for random graphs.

At this point we remark about our choice of parameters for the trials.  At this
point the sensitivity of the algorithm to the choice of $\epsilon, \phi, s,$ and
$\varsigma$ has not been fully explored.  In particular, it is worth studying
the effect of $\phi$ on the output cluster in future work.

To conclude, we include visualizations of clusters computed in the facebook
ego-network to illustrate the differences in local cluster detection.
Figure~\ref{fig:fb_ehkpr} colors the vertices in a local cluster computed using
the $\epsilon$HKPR algorithm, as described in Table~\ref{table:cheegerreal}.
Figure~\ref{fig:fb_pr} colors the vertices in a local cluster compted using the
PR algorithm.

\begin{figure}
\centering
\includegraphics[width=\linewidth]{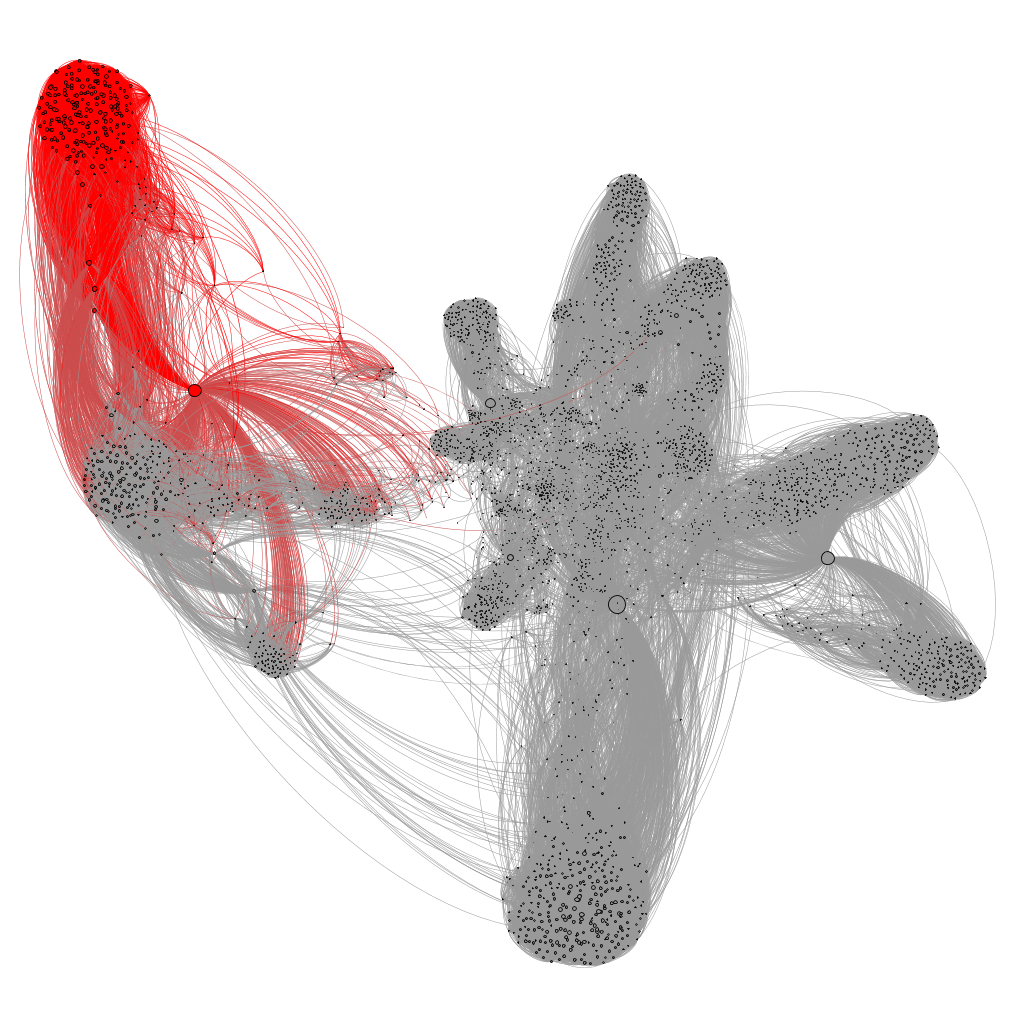}
\caption{Local cluster in facebook ego network computed using the $\epsilon$HKPR
algorithm.}
\label{fig:fb_ehkpr}
\end{figure}

\begin{figure}
\centering
\includegraphics[width=\linewidth]{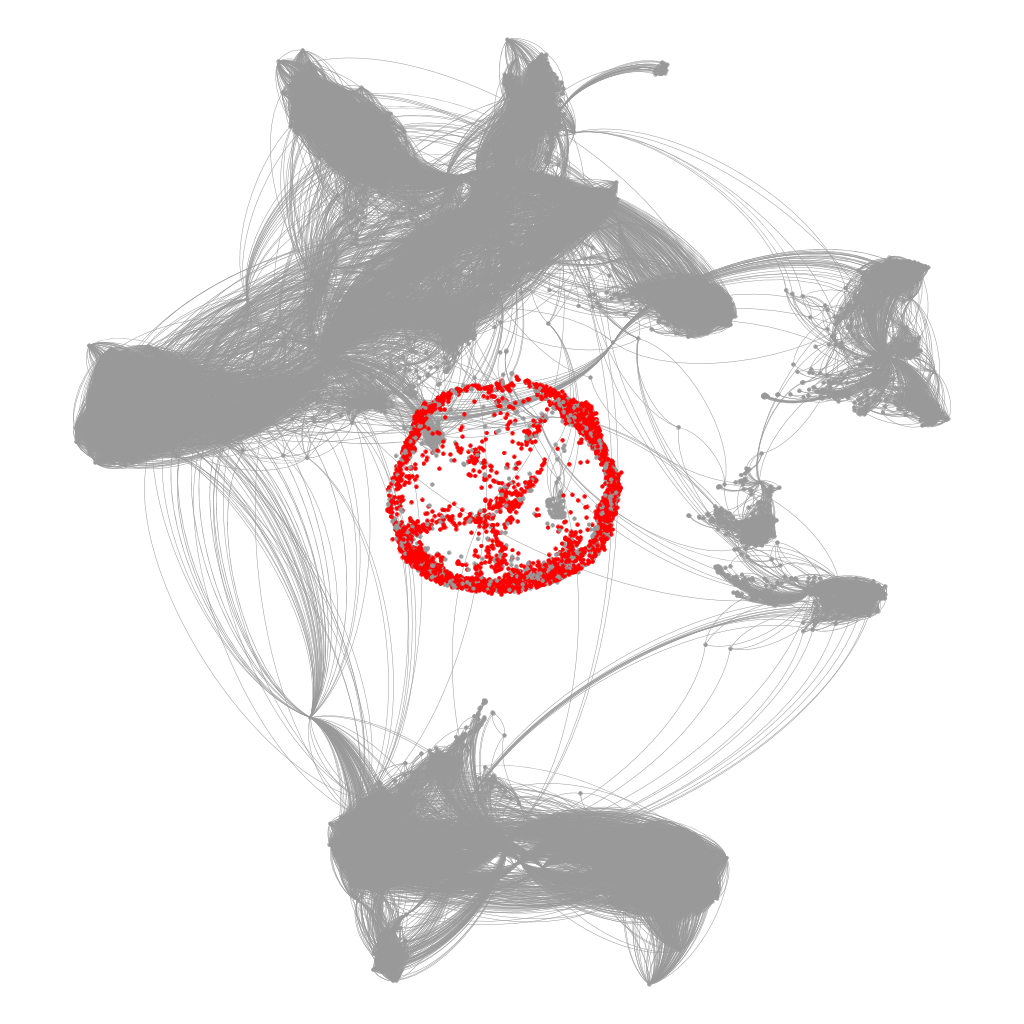}
\caption{Local cluster in facebook ego network computed using the PR
algorithm.}
\label{fig:fb_pr}
\end{figure}

The numerical data of the last two sections validate the effectiveness and
efficiency of local cluster detection using sweeps over $\epsilon$-approximate
heat kernel pagerank.  The experiments of Section~\ref{sec:rankings} demonstrate
that sampling a number of random walks of at most $K$ steps yield a ranking of
vertices within the error bounds of Theorem~\ref{thm:hkpraccuracy}.  This ranking
in turn is used to compute a local cluster.  What is more, this value $K$ does
not depend on parameters other than $\epsilon$.  Specifically, it does not
depend on the size of the graph or the desired cluster volume, size, or Cheeger
ratio.  Finally, the data of Section~\ref{sec:expresults} validate the
statements of Theorem~\ref{thm:localpart}.  That is, perfoming a sweep over an
approximate heat kernel pagerank vector detects clusters of Cheeger ratio at
most $\sqrt{8\phi}$ for a desired Cheeger ratio $\phi$.  The total cost of
computing this cluster is $\hkprcomplexity$, sublinear in the size of the graph.

%
\bibliographystyle{amsplain}
\bibliography{ejc_hkpr}  
%
%

\end{document}